\DeclarePairedDelimiter\ceil{\lceil}{\rceil}
\DeclarePairedDelimiter\floor{\lfloor}{\rfloor}
\theoremstyle{plain}
\theoremstyle{Proposition}
\newtheorem{prop}[theorem]{Proposition} 
\theoremstyle{definition}
\begin{document}

\title{Coalition Resilient Outcomes in Max $k$-Cut Games}

\author{%
Raffaello Carosi\inst{1}  \and
Simone Fioravanti\inst{2} \and
Luciano Gual\`{a}\inst{2} \and 
Gianpiero Monaco\inst{3} 
}%

\institute{
Gran Sasso Science Institute, Italy,\\
\email{raffaello.carosi@gssi.it}\\ 
\and
University of Rome ``Tor Vergata'', Italy,\\
\email{simonefi92@gmail.com}\\
\email{guala@mat.uniroma2.it}
\and
DISIM - University of L'Aquila, Italy,\\
\email{gianpiero.monaco@univaq.it}}

\maketitle

\begin{abstract}
We investigate strong Nash equilibria in the \emph{max $k$-cut game}, where we are given an undirected edge-weighted graph together with a set $\{1,\ldots, k\}$ of $k$ colors. Nodes represent
players and edges capture their mutual interests.
The strategy set of each player $v$ consists of the $k$ colors. When players select a color they induce a $k$-coloring or simply a coloring. Given a coloring, the \emph{utility} (or \emph{payoff}) of a player $u$ is the sum of the weights of the edges $\{u,v\}$ incident to $u$, such that the color chosen by $u$ is different from the one chosen by $v$.
Such games form some of the basic payoff
structures in game theory, model lots of real-world scenarios with selfish agents and extend or are related to several fundamental classes of games. 

Very little is known about the existence of
strong equilibria in max $k$-cut games. In this paper we make some steps forward in the comprehension of it. We first show that improving deviations performed by minimal coalitions can cycle, and thus answering negatively the open problem proposed in \cite{DBLP:conf/tamc/GourvesM10}. Next, we turn our attention to unweighted graphs. We first show that any optimal coloring is a 5-SE in this case. 
Then, we introduce $x$-local strong equilibria, namely colorings that are resilient to deviations by coalitions such that the maximum distance between every pair of nodes in the coalition is at most $x$. We prove that $1$-local strong equilibria always exist. Finally, we show the existence of strong Nash equilibria in several interesting specific scenarios.
\end{abstract}

\section{Introduction}
We consider the \emph{max $k$-cut game}. This is played on an undirected edge-weighted graph where the $n$ nodes correspond to the players and the edges capture their mutual interests. The strategy space of each player is a set $\{1,\ldots, k\}$ of $k$ available colors (we assume that the colors are the same for each player). When players select a color they induce a $k$-coloring or simply a coloring. Given a coloring, the \emph{utility} (or \emph{payoff}) of a player $u$ is the sum of the weights of edges $\{u,v\}$ incident to $u$, such that the color chosen by $u$ is different from the one chosen by $v$. The objective of every player is to maximize its own utility. 
 
This class of games forms some of the basic payoff structures in game theory, and can model lots of real-life scenarios. Consider, for example, a set of companies that have to decide which product to produce in order to maximize their revenue. Each company has its own competitors (for example the ones that are in the same region), and it is reasonable to assume that each company wants to minimize the number of competitors that produce the same product. 
Another possible scenario is in a radio setting; radio towers are players and their goal is selecting a frequency such that neighboring radio-towers have a different one in order to minimize the interference.

In such games on graphs it is beneficial for each player to anti-coordinate its choices with the ones of its neighbors (i.e., selecting a different color). As a consequence, the  players may attempt to increase their utility by coordinating their choices in groups (also called coalitions). Therefore, in our studies we focus on equilibrium concepts that are resilient to deviations of groups. Along this direction, a very classic notion of equilibrium is the strong Nash equilibrium (SE) \cite{A60} that is a coloring in which no coalition, taking the actions of its complements as given, can cooperatively deviate in a way that benefits all of its members, in the sense that every player of the coalition strictly improves its utility. The notion of SE is a very strong equilibrium concept. A weaker one is the notion of \emph{$q$-Strong Equilibrium} ($q$-SE), for some $q \leq n$, where only coalitions of at most $q$ players are allowed to cooperatively change their strategies. Notice that the $1$-SE is equivalent to the Nash equilibrium (NE), while the $n$-SE is equivalent to the SE.

When it exists, an SE is a very robust state of the game and it is also more sustainable than an NE. However, while NE always exists in these games \cite{DBLP:conf/cocoon/CarosiM18,DBLP:phd/de/Hoefer2007,DBLP:conf/sagt/KunPR13}, little is known about the existence of strong equilibria in Max $k$-cut games. Indeed, to the best of our knowledge, there are basically two papers of the literature dealing with such issue. In \cite{DBLP:conf/wine/GourvesM09} the authors show that an optimal strategy profile (or optimal coloring), i.e., a coloring that maximizes the sum of the players' utilities or equivalently, a coloring that maximizes the $k$-cut, is an SE for the max $2$-cut game, and it is a $3$-SE, for the max $k$-cut game, for any $k \geq 2$.  Moreover, they further show that an optimal strategy profile is not necessarily a $4$-SE, for any $k \geq 3$. In \cite{DBLP:conf/tamc/GourvesM10} they show that, if the number of colors is at least the number of players minus two, then an optimal strategy profile is an SE. Finally, they show that the dynamics, where at each step a coalition can deviate so that all of its members strictly improve their utility by changing strategy, can cycle. The main consequence of this latter fact is that no \emph{strong potential function}\footnote{See Section \ref{sec:preliminaries} for the definition of strong potential function.} can exist for the game, and hence the existence of an SE cannot be proved by simply exhibiting it. It is worth noticing that strong potential functions are one of the main tools used to prove the existence of an SE.  

All the above results suggest that it is hard to understand whether SE always exist for max $k$-cut games. In this paper, although we do not prove or disprove that every instance
of the max $k$-cut game possesses a strong equilibrium, we make some step forward in the comprehension of it.

\smallskip
\noindent{\bf Our results.} As pointed out in \cite{DBLP:conf/tamc/GourvesM10}, sometimes the existence of an SE is proved by means of a potential function in which the set of deviating coalitions is restricted to minimal coalitions only, where a coalition is minimal if none of its proper subsets can perform an improvement themselves (see for example \cite{HarksKM09}). Understanding whether this approach can be used in the max $k$-cut game is mentioned as an open problem in \cite{DBLP:conf/tamc/GourvesM10}. We answer this question negatively (see Proposition \ref{prop:cyclepotential}) by showing an instance in which there is a cycle of improving deviations performed by minimal coalitions only.

We then focus on the unweighted case, where the utility of a player in a coloring is simply the number of neighbors with different color from its own, and we provide some non-trivial existential results for it. In particular, in Section \ref{sec:existence_5-SE} we show that $5$-SE always exist for the max $k$-cut game. This is an improvement with respect to the existence of $3$-SE \cite{DBLP:conf/wine/GourvesM09}. 

Besides $q$-SE, we also consider another equilibrium concept that is weaker than the notion of SE. Observe that in a $q$-SE two players can form a coalition even if they are far from each other in the graph. This is unrealistic in many practical scenarios. In oder to encompass this aspect, in Section \ref{sec:local_SE}, we introduce the concept of \emph{$x$-Local} SE ($x$-LSE). A coloring is an $x$-LSE if it is resilient to deviations by coalitions such that the shortest path between every pair of nodes in the coalition is long at most $x$. Therefore, 
the notion of $x$-LSE also takes into account that certain players may not have the possibility of communicating to each other and thus to form a coalition. This seems an important point to consider when modeling a situation of strategic interaction between agents. Here we suppose that the input graph also represents knowledge between players, that is two nodes know each other if they are connected by an edge. In this paper we focus on the case $x=1$, that is each player in the coalition must have a social connection (namely an edge) towards every deviating player. We show that, for any $k$, a $1$-LSE always exists. Interestingly enough, our analysis also provides a characterization of the set of local strong equilibria which relates $1$-LSE to $q$-SE.

Finally, in Section \ref{sec:special cases}, we show that an SE always exists for some special classes of unweighted graphs. More precisely, in Corollary \ref{cor:girth}, we prove that in graphs with large girth, any optimal strategy profile is an SE, for any $k \geq 2$. Moreover, in Proposition \ref{prop:degree}, we prove that whenever the number of colors $k$ is large enough with respect to the maximum degree of the graph, then any optimal strategy profile is an SE.     
  
Some proofs have been moved to the appendix.

\smallskip
\noindent{\bf Further related work.} The max $k$-cut game has been first investigated in \cite{DBLP:phd/de/Hoefer2007,DBLP:conf/sagt/KunPR13}, where the authors show that, when the graph is unweighted and undirected, it is possible to compute a Nash Equilibrium in polynomial time by exploiting the potential function method. When the graph is weighted undirected, even if the potential function ensures the existence of NE the problem of computing an equilibrium is PLS-complete even for $k=2$~\cite{SY91}. In fact, for such a value of $k$, it coincides with the classical max cut game. In \cite{DBLP:conf/cocoon/CarosiM18} the authors show the existence of NE in generalized max $k$-cut games where players also have an extra profit depending on the chosen color. When the graph is directed, the max $k$-cut game in general does not admit a potential function. Indeed, in this case, even the problem of understanding whether they admit a Nash equilibrium is NP-complete for any fixed $k \geq 2$ \cite{DBLP:conf/sagt/KunPR13}. In \cite{DBLP:conf/atal/CarosiFM17}, the authors present a randomized polynomial time algorithm that computes a constant approximate Nash equilibrium for a large class of directed unweighted graphs.

Studies on the performance of Nash equilibria  and strong Nash equilibria can be found in \cite{DBLP:conf/cocoon/CarosiM18,DBLP:phd/de/Hoefer2007,DBLP:conf/sagt/KunPR13} and in \cite{Feldman2015,DBLP:conf/wine/GourvesM09,DBLP:conf/tamc/GourvesM10}, respectively.

A related stream of research considers coordination games. The idea is that agents are rewarded for choosing common strategies in order to capture the influences. Apt et al.~\cite{DBLP:journals/ijgt/AptKRSS17} propose a coordination game modeled as an undirected graph where nodes are players and each player has a list of allowed colors. Given a coloring, an agent has a payoff equal to the number of adjacent nodes with its same color. The authors show that NE and $2$-SE always exists, and give an example in which no $3$-SE exists. Moreover, they prove that strong equilibria exist for various special cases. 

Panagopoulou and Spirakis~\cite{anti2} study games where Nash equilibria are proper node coloring in undirected unweighted graphs setting. In particular, they consider the game where each agent $v$ has to choose a color among $k$ available ones and its payoff is equal to the number of nodes in the graph that have chosen its same color, unless some neighbor of $v$ has chosen the same color, and in this case the payoff of $v$ is $0$. They prove that this is a potential game and that a Nash equilibrium can be found in polynomial time. 

Max $k$-cut games are related to many other fundamental games considered in the scientific literature. One example is given by the graphical games introduced in \cite{KLS01}. In these games the payoff of each agent depends only on the strategies of its neighbors in a given social knowledge graph defined over the set of the agents, where an arc $(i,j)$ means that $j$ influences $i$'s payoff. Max $k$-cut games can also be seen as a particular hedonic game (see \cite{AS14} for a nice introduction to hedonic games) with an upper bound (i.e., $k$) to the number of coalitions. Specifically, given a $k$-coloring, the agents with the same color can be seen as members of the same coalition of the hedonic game. In order to get the equivalence among the two games, the hedonic utility of an agent $v$ can be defined as the overall number of its neighbors minus the number of agents of its neighborhood that are in the same coalition. Nash equilibria issues in hedonic games have been largely investigated under several different assumptions \cite{BFFMM18,BJ02,MMV18} (just to cite a few).     

Concerning local coalitions, a notion of equilibrium close in spirit to our LSE has been studied in the context of network design games in \cite{DBLP:conf/podc/LeonardiS07}. Moreover, locality aspects have been also considered when restricting the strategy space in single-player deviations (see for example \cite{DBLP:conf/spaa/BiloGLP14,Cord-LandwehrL15}).

Finally, it is worth mentioning the classical optimization max cut problem, a very famous problem in graph theory that was proven to be NP-Hard by Karp~\cite{np}. 

\section{Preliminaries}\label{sec:preliminaries}

Let $G = (V,E,w)$ be an undirected weighted graph, where $\lvert V\rvert = n$, $\lvert E\rvert = m$, and $w: E \rightarrow \mathbb{R}_{+}$. Let $\delta^v(G) = \sum_{u\in V: \{v,u\}\in E}w(\{v,u\})$ denote the \emph{degree} of $v$, that is the sum of the weights of all the edges incident to $v$. Let $\delta^{M}(G) = max_{v\in V}\delta^v(G)$ denotes the maximum degree in $G$. Given a set of nodes $V'\subseteq V$, let $G(V') = (V', E', w)$ be the subgraph induced by $V'$, where $E' = \{\{v,u\}\in E \mid v\in V' \wedge u\in V'\}$.
For any pair of nodes $v,u \in V$, the \emph{distance} $dist_G(v,u)$ between $v$ and $u$ in $G$ is equal to the length of the shortest path from $v$ to $u$\footnote{Even if the graph is weighted, we consider here the \emph{hop-distance}, where the length of a path is defined as the number of its edges.}. 

Given $G$ and a set of colors $K =\{1,\ldots k\}$, the \emph{max k-cut} problem is to partition the vertices into $k$ subsets $V_1, \ldots, V_k$ such that the sum of the weights of the edges having the endpoints in different sets is maximized. A strategic version of the max k-cut problem is the \emph{max k-cut game}, and it is defined as follows. There are $\lvert V\rvert$ players, and each node of $G$ is controlled by exactly one rational player. Players have the same strategy set, and it is equal to the set of colors $\{1,\ldots k\}$. A \emph{strategy profile}, or \emph{coloring} $\sigma : V \rightarrow K$, is a labeling of nodes of $G$ in which each player $v$ is colored $\sigma(v)$. Given a coloring $\sigma$, let $E(\sigma) = \{\{u,v\}: \sigma(u) \neq \sigma(v)\}$ be the edges that are proper with respect to $\sigma$, and let $\delta_u^{i}(\sigma) = \sum_{v\in V}w(\{u,v\})_{\sigma(v) = i}$ be the sum of the weights of the edges incident to $u$ and towards nodes colored $i$ in $\sigma$. The utility (or payoff) of player $u$ is defined as $\mu_u(\sigma) = \sum_{v\in V: \{u,v\}\in E \wedge \sigma(u)\neq \sigma(v)} w(\{u,v\})$.  
The \emph{cut-value}, or \emph{size of the cut}, of a coloring $S(\sigma)$ is defined as follows: $S(\sigma)=\sum_{\{u,v\}\in E \wedge \sigma(u)\neq \sigma(v)} w(\{u,v\})$.  
The \emph{social welfare} of a coloring $\sigma$ is defined as the sum of players' utilities, that is $SW(\sigma) = \sum_{v\in V}\mu_v(\sigma) = 2 S(\sigma)$. Moreover, an optimal strategy profile (or optimal coloring) is defined to be a strategy profile which maximizes the sum of the players' utilities and thus the cut-value.

Given a coalition $C \subseteq V$ and a coloring $\sigma$, let $C_K(\sigma) = \{i \in K \mid \exists\text{ }v\in C \text{ s.t. } \sigma(v) = i \}$ be the set of colors used by the coalition $C$ in $\sigma$. Moreover, for each color $i$, let $C_i(\sigma) = \{v\in C \mid \sigma(v) = i\}$ be the set of players in $C$ that are colored $i$ in $\sigma$. 

Given a strategy profile $\sigma$, a player $v$ and a coalition $C$, we denote by $\sigma_{-v}$ and $\sigma_{-C}$ the strategy profile $\sigma$ besides the strategy played by $v$ and by $C$, respectively. Moreover, we denote by $\sigma_C$ the coloring $\sigma$ restricted only to players in $C$, and we use $(\sigma_{-v}, \sigma(v))$ and $(\sigma_{-C}, \sigma_C)$ to denote $\sigma$.

A profile $\sigma$ is a \emph{Nash Equilibrium} (NE) if no player can improve its payoff by deviating unilaterally from $\sigma$, that is, $\mu_v(\sigma_{-v}, i) \leq \mu_v(\sigma)$ for each player $v \in V$ and for each color $i \in K$. For each $1 \leq q \leq n$, $\sigma$ is a \emph{$q$-Strong Equilibrium} ($q$-SE) if there exists no coalition $C$ with $\lvert C\rvert \leq q$ that can cooperatively deviate from $\sigma_{C}$ to $\sigma_C'$ in such a way that every player in $C$ strictly improves its utility in $(\sigma_{-C},\sigma_{C}')$. The $1$-strong equilibrium is equivalent to the Nash equilibrium, while for $q=n$ an $n$-strong equilibrium is called \emph{strong equilibrium (SE)}. When a coalition $C$ deviates so that all of its members strictly improve their utility, then we say it performs a \emph{strong improvement}. A strong improvement is said to be \emph{minimal} if no proper subsets of the deviating coalition can perform an improvement themselves, and the coalition itself is said to be minimal. A strong improving dynamics (shortly dynamics) is a sequence of strong improving moves. A game is said to be convergent if, given any initial state, any sequence of improving moves leads to a strong Nash equilibrium. Given a coloring $\sigma$, if a coalition $C$ induces a new coloring $\sigma'$ after deviating, then we say that the set of edges $E(\sigma')\backslash E(\sigma)$ \emph{enters} the cut, and that the set of edges $E(\sigma)\backslash E(\sigma')$ \emph{leaves} the cut.

A \emph{potential function} $\Phi$ is a function mapping strategy profiles into real values in such a way that, for each coloring $\sigma$ and each player $v$, whenever $v$ can profitably deviate from $\sigma$ yielding a new coloring $\sigma'$, it holds that $\Phi(\sigma') > \Phi(\sigma)$. When this is true also for profitably deviations performed by coalitions, the function is called \emph{strong potential function}.

We conclude this section by stating some properties about minimal coalitions that will be useful later. The proofs of these properties can be found in the appendix.

\begin{prop}\label{prop:propminimalcoalition}
Let $\sigma$ be a coloring, and let $C$ be a minimal coalition that can perform a strong improvement from $\sigma$. Let $\sigma'$ be the resulting coloring. Then, the following properties hold: (i) $C_K(\sigma) = C_K(\sigma')$; and (ii) if $G(C)$ is acyclic, then changing from $\sigma$ to $\sigma'$ strictly increases the size of the cut.
\end{prop}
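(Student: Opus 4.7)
\smallskip\noindent\textbf{Plan.} I prove the two parts separately, both exploiting the minimality of $C$ though in rather different ways.

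For part (i), I argue by contradiction. Suppose $C_K(\sigma)\neq C_K(\sigma')$: the plan is to exhibit, in each of the two possible cases of the mismatch, a strict sub-coalition of $C$ that already performs a strong improvement from $\sigma$, contradicting minimality. If $i\in C_K(\sigma')\setminus C_K(\sigma)$, pick any $v\in C$ with $\sigma'(v)=i$; since no member of $C$ is colored $i$ under $\sigma$, every neighbor $u\in N(v)\cap C$ contributes the full weight $w(\{u,v\})$ to $v$'s payoff under the singleton deviation $(\sigma_{-v},i)$, whereas in $\sigma'$ that same neighbor may itself have moved to $i$ and contribute $0$. Hence $\mu_v(\sigma_{-v},i)\geq\mu_v(\sigma')>\mu_v(\sigma)$, so the singleton $\{v\}$ is a proper sub-coalition that already strongly improves. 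Symmetrically, if $j\in C_K(\sigma)\setminus C_K(\sigma')$, consider $C'':=C\setminus C_j(\sigma)$; when $C''$ is a non-empty proper subset of $C$, letting only $C''$ deviate to $\sigma'|_{C''}$ keeps every neighbor in $C_j(\sigma)$ of each $v\in C''$ at color $j\neq\sigma'(v)$ (because $j\notin C_K(\sigma')$), which by the same bookkeeping gives each such $v$ at least the payoff it has under $\sigma'$, so $C''$ itself strongly improves. The border case $C''=\emptyset$, i.e.\ $C=C_j(\sigma)$, is handled by the analogous singleton argument applied to any $v\in C$.

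For part (ii), the plan is a counting argument. Splitting the neighbors of each $v\in C$ into those inside $C$ and those on the boundary, one gets
\[
\sum_{v\in C}\bigl(\mu_v(\sigma')-\mu_v(\sigma)\bigr) \;=\; 2\Delta_1+\Delta_2,
\]
where $\Delta_1$ (resp.\ $\Delta_2$) is the change in the weight of cut edges internal to $G(C)$ (resp.\ on the boundary), while the overall change in the cut equals $\Delta_1+\Delta_2$. In the unweighted setting, which is the one of interest for the paper's later sections, each summand on the left is an integer strictly greater than $0$ and hence at least $1$, so the sum is at least $|C|$. Acyclicity of $G(C)$ gives $|E(G(C))|\leq |C|-1$, which forces $\Delta_1\leq |E(G(C))|\leq |C|-1$. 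Combining,
\[
\Delta_1+\Delta_2 \;=\; (2\Delta_1+\Delta_2)-\Delta_1 \;\geq\; |C|-(|C|-1) \;=\; 1 > 0,
\]
so the cut strictly increases.

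The main obstacle is in part (i), namely identifying the right sub-coalition and verifying carefully that the members held fixed cannot hurt the deviation; once this is set, the comparison of $\mu_v(\tau)$ with $\mu_v(\sigma')$ reduces to a non-negative sum over the neighbors that happen to sit in the discarded piece. Part (ii) is then a direct combination of the structural bound $|E(G(C))|\leq |C|-1$ coming from acyclicity with the integer-granular lower bound on the utility sum given by strong improvement.
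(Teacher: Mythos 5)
Your part (i) is correct and follows essentially the paper's own route: a color in $C_K(\sigma')\setminus C_K(\sigma)$ yields a profitable unilateral deviation, and a color $j\in C_K(\sigma)\setminus C_K(\sigma')$ lets the proper subset $C\setminus C_j(\sigma)$ improve on its own; you merely fill in the payoff comparisons (neighbors left behind at color $j$ can only help, since $j\notin C_K(\sigma')$) that the paper dismisses as ``easy to see.'' One shared blemish: for $|C|=1$ the ``deviate alone'' subset is not proper, and statement (i) is in fact literally false for a singleton coalition; both your argument and the paper's implicitly assume $|C|\geq 2$, so this does not count against you.

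Part (ii) is where there is a genuine gap. The proposition sits in the paper's preliminaries, where $w:E\rightarrow\mathbb{R}_{+}$, and the paper's appendix proof explicitly covers \emph{weighted} graphs: it builds a dependency digraph on $C$ (an arc $(v,u)$ when $v$'s move is profitable only if $u$ also moves), argues via minimality that every member has in- and out-degree at least one, hence a directed cycle, so an acyclic minimal coalition can use at most two colors; the two-color case is then the weighted max-cut result of Gourv\`es--Monnot, and the one-color case is a unilateral Nash improvement. Your counting argument --- $\sum_{v\in C}(\mu_v(\sigma')-\mu_v(\sigma))=2\Delta_1+\Delta_2$, cut change $=\Delta_1+\Delta_2$, integrality forcing $2\Delta_1+\Delta_2\geq|C|$, and the forest bound $\Delta_1\leq|C|-1$ --- is valid, but only in the unweighted setting, as you concede yourself. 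With real weights the step ``each improvement is at least $1$'' fails (the paper's own cycling instance in Proposition 2 has improvements of only $\varepsilon$), $\Delta_1$ is bounded only by the total internal edge weight rather than by $|C|-1$, and the cancellation collapses. Note also that your argument never invokes minimality, and minimality cannot be dispensed with in the weighted case, where strong improvements by non-minimal acyclic coalitions can decrease the cut; so your plan has no hook by which it could be repaired for weights. In summary: as a proof of the proposition as stated (weighted), part (ii) is incomplete; as an unweighted statement, your version is in fact \emph{stronger} than the paper's (no minimality or appeal to the max-cut literature needed) and pleasantly self-contained, which would be worth keeping alongside the paper's dependency-digraph argument rather than in place of it.
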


\section{Non-existence of a minimal strong potential function}\label{sec:looppotential}

\begin{figure}[h]
\begin{center}
\includegraphics[scale=0.45]{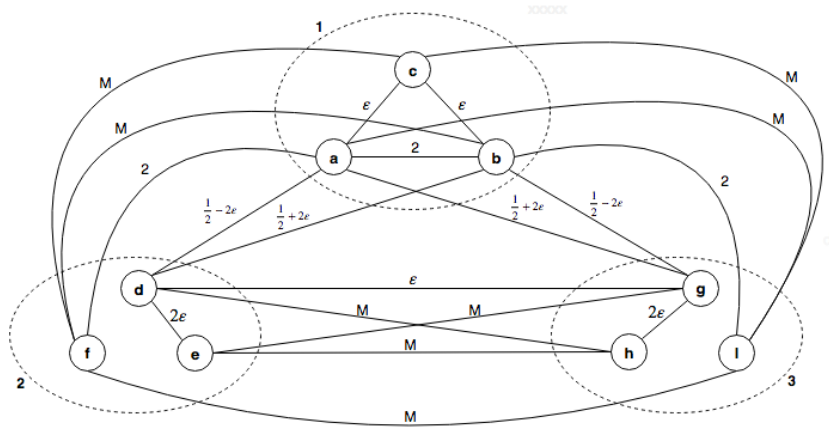}
\caption{Instance for which the strong improvement dynamics cycles.}\label{fig:minimalCycle}
\end{center}
\end{figure}

In this section we focus on weighted graphs and, as discussed in the introduction, we close an open problem stated by Gourv{\`{e}}s and Monnot in \cite{DBLP:conf/tamc/GourvesM10} by providing an instance in which there is a cycle of improving deviations performed by minimal coalitions only. More specifically, the loop is composed by the deviation of a clique, followed by four improvements performed by single players.

\begin{prop}\label{prop:cyclepotential}
No strong potential function exists for the max k-cut game, even if only minimal coalitions are allowed to deviate.
\end{prop}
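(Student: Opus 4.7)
The plan is to prove Proposition \ref{prop:cyclepotential} constructively, by exhibiting the specific weighted instance depicted in Figure \ref{fig:minimalCycle} together with a closed sequence of five colorings
$$\sigma_0 \to \sigma_1 \to \sigma_2 \to \sigma_3 \to \sigma_4 \to \sigma_0$$
such that every transition is a strong improvement performed by a minimal coalition. The existence of such a cycle is incompatible with any real-valued strong potential function $\Phi$ restricted to minimal-coalition deviations, since we would need $\Phi(\sigma_0) < \Phi(\sigma_1) < \cdots < \Phi(\sigma_4) < \Phi(\sigma_0)$, a contradiction. So the entire work lies in designing the instance and verifying the five transitions.

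Concretely, I would split the verification into two types of steps. The first transition $\sigma_0 \to \sigma_1$ is performed by a clique $C$ (a group of mutually adjacent players) recoloring itself; the remaining four transitions $\sigma_i \to \sigma_{i+1}$ are unilateral deviations of single players, which are minimal coalitions by definition (a singleton has no proper nonempty subset). For the singleton steps I would simply compute, for the deviating player $v$ and the proposed new color $c$, the quantity $\mu_v(\sigma_{i,-v}, c) - \mu_v(\sigma_i)$ from the edge weights incident to $v$ and check that it is strictly positive. This is a routine per-player check using $\delta_v^{\sigma_i(v)}(\sigma_i)$ versus $\delta_v^{c}(\sigma_i)$.

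The main obstacle lies in the first step. Here I must establish two things simultaneously. First, the clique deviation $\sigma_{0,C} \to \sigma_{1,C}$ must be a \emph{strong} improvement: every member of $C$ strictly gains utility when the whole clique recolors, which by Proposition \ref{prop:propminimalcoalition}(i) forces $C_K(\sigma_0) = C_K(\sigma_1)$, so the new coloring on $C$ is a nontrivial permutation of the old one. Second, and this is the subtle part, I must verify \emph{minimality}: no proper nonempty subset $C' \subsetneq C$ can already strongly improve from $\sigma_0$ by itself (keeping $C \setminus C'$ fixed at $\sigma_0$). I would do this by case analysis over the subsets of $C$: for each candidate $C'$ and each alternative coloring $\sigma'_{C'}$, I would exhibit at least one player $v \in C'$ for whom the deviation is not strictly improving. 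The weights in the instance are tuned precisely to make the gain inside the clique possible only when all of $C$ moves in unison — any smaller subgroup leaves at least one of its members worse off or indifferent, because of the internal edges of the clique inherited from $\sigma_0$.

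Finally, I would close the argument by observing that $\sigma_4$ differs from $\sigma_0$ only on exactly the players that moved along the cycle, and a direct check on the figure confirms $\sigma_4 = \sigma_0$, closing the loop and completing the contradiction with the existence of $\Phi$. The crux, to emphasize, is the combinatorial design of the clique weights ensuring both strong improvement and non-decomposability into smaller improving moves; everything else reduces to straightforward per-vertex utility arithmetic.
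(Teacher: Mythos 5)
Your proposal is correct and follows essentially the same route as the paper: the same instance from Figure \ref{fig:minimalCycle}, a five-step cycle consisting of one minimal clique deviation followed by four unilateral (hence trivially minimal) improvements, and the standard contradiction with the monotonicity that a strong potential function would impose along the cycle. The only cosmetic difference is in how minimality of the clique move is certified --- you propose exhaustive enumeration over proper subsets, while the paper argues it more economically via the heavy $M$-weight edges, which force each member's unique alternative color and create a chain of mutual dependencies ($a$ needs $d$ to move, $d$ needs $a$ and $b$, $b$ needs $g$, $g$ needs $a$ and $b$) --- and one small over-claim: Proposition \ref{prop:propminimalcoalition}(i) yields only $C_K(\sigma_0)=C_K(\sigma_1)$ as sets, not that the new coloring is a permutation of the old one, though this is immaterial since the deviation is verified directly.
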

\begin{proof}
Consider the graph $G$ and the coloring $\sigma$ depicted in Figure \ref{fig:minimalCycle}, where, if a node $v$ is contained in the dashed ellipse labeled $i$, then $v$ is colored $i$ in $\sigma$, and where $M$ and $\varepsilon$ denote a very large and small positive value, respectively. 

Consider coalition $C = \{a,b,d,g\}$ and consider the deviation $\sigma'$ where $\sigma'(a) = 2$, $\sigma'(b) = 3$, $\sigma'(d) = 1$, $\sigma'(g) = 1$.
It is easy to check that this deviation is profitable for players in $C$. In fact, they all improve their utility by $\varepsilon$.

Player $a$ has utility $3 + M$ in $\sigma$, and since it has an edge of weight $M$ towards node $i$ having color $3$, $a$ can only deviate to color $2$. Hence,
$a$ strictly improves its utility from $3 + M$ to $3 + M+ \varepsilon$ only if node $d$ changes color. Analogously, in $\sigma$, $d$ has one edge of weight $M$ towards node $h$ having color $3$. Thus, $d$ can only switch to color $1$ and this is convenient for it only if both $a$ and $b$ leave color $1$. If this happens, $d$'s utility increases by at least $\varepsilon$. Similarly to $a$, player $b$ deviates to color $3$ only if player $g$ switches to color $1$, and this happens only if both $a$ and $b$ deviates too. To sum up, both $d$ and $g$ deviate if and only if $a$ and $b$ deviate too. Thus, $C$ is minimal. Note that edge $\{d,g\}$ becomes monochromatic, but $d$ and $g$'s new payoffs make the deviation worth it anyway, since they both increase their utility by $\varepsilon$.

After the players in $C$ jointly deviate from $\sigma$, player $a$, who is now colored $2$, can go back to color $1$, improving its utility from $3 + \varepsilon + M$ to $4 + M$. Because of $a$'s deviation, $d$'s utility goes down to $1/2 +4\varepsilon + M$. Thus, it goes back to color $2$, achieving $1 + \varepsilon + M$. Also $g$, whose utility is now $1/2 + \varepsilon + M$, deviates to its old color in $\sigma$, that is color $3$, and it gets $1/2 + 3\varepsilon + M$. In this configuration $b$' utility is 
$5/2 +3\varepsilon +M$. Thus going back to color $1$ its utility improves to $3 + M$ and we are now back to the initial configuration $\sigma$. 
\end{proof}


\section{The existence of a 5-SE in unweighted graphs}\label{sec:existence_5-SE}

From now on we will focus on unweighted graphs. In \cite{DBLP:conf/wine/GourvesM09} it is shown that in the weighted case, any optimal strategy profile is always a $3$-SE, and there are weighted graphs in which every optimal coloring is not a $4$-SE. 
In this section we improve this result for unweighted graphs, by showing that a $5$-SE always exists. This also establishes a separation between the weighted and unweighted case. In particular, we show that by performing minimal strong improvements with coalitions of size at most five, the cut value increases. It implies that the cut value is a potential function and thus the dynamics converges to $5$-SE. We start by showing a simple lemma that is used in the rest of the section.

\begin{lemma}\label{lemma:2}
Let $\sigma$ be an NE and let $C$ be a minimal coalition which would profit by deviating from $\sigma$ to $\sigma'$.
If there exists two players $u,x\in C$ such that:
\begin{enumerate}
\item[(i)] $\sigma(u) \neq \sigma(x)$
\item[(ii)] $\sigma'(u) = \sigma(x)$
\item[(iii)] $\{y\in C| \{u,y\}\in E, \sigma(y) = \sigma(x)\} = \{x\}$ 
\end{enumerate}
then $\delta_u^{\sigma(u)}(\sigma) = \delta_u^{\sigma'(u)}(\sigma)$. Moreover, if there exists a third player $v\in C$ such that $\sigma(v) \neq \sigma(x)$ and $\sigma'(v) = \sigma(x)$, then $\{u,v\}\notin E$.
\end{lemma}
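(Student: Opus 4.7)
The strategy is to sandwich $\delta_u^{\sigma(u)}(\sigma)$ between two bounds — one from the Nash equilibrium assumption and one from the strong improvement — and then to use integrality of the quantities involved (since we are in the unweighted setting) to collapse the two bounds to an equality. Since $\sigma(u) \neq \sigma(x)$ by (i) and $\sigma'(u) = \sigma(x)$ by (ii), the fact that $\sigma$ is a Nash equilibrium prevents $u$ from profitably switching alone to color $\sigma(x)$, yielding the upper bound $\delta_u^{\sigma(u)}(\sigma) \leq \delta_u^{\sigma(x)}(\sigma)$. On the other hand, since $u$ strictly improves when going from $\sigma$ to $\sigma'$, one has $\delta_u^{\sigma'(u)}(\sigma') < \delta_u^{\sigma(u)}(\sigma)$.

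The next step is to rewrite $\delta_u^{\sigma(x)}(\sigma') = \delta_u^{\sigma'(u)}(\sigma')$ and compare it with $\delta_u^{\sigma(x)}(\sigma)$. I would split the neighbors of $u$ according to whether they lie outside or inside $C$: those outside $C$ keep their color in $\sigma'$ and contribute a common amount $A$ to both quantities. For the ones inside $C$, condition (iii) gives that the only neighbor of $u$ in $C$ colored $\sigma(x)$ in $\sigma$ is $x$, hence $\delta_u^{\sigma(x)}(\sigma) = A + 1$. Here is the spot where minimality of $C$ plays its role: $x$ must change its color in $\sigma'$, for otherwise the subcoalition $C\setminus\{x\}$ would realize the very same deviation $\sigma \to \sigma'$ and all its members would still strictly improve, contradicting minimality of $C$. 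Letting $B$ denote the number of neighbors of $u$ in $C\setminus\{x\}$ whose color becomes $\sigma(x)$ in $\sigma'$, this yields $\delta_u^{\sigma(x)}(\sigma') = A + B$.

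Combining these facts produces the chain $A \leq A + B < \delta_u^{\sigma(u)}(\sigma) \leq A + 1$. Since in the unweighted case all these degree counts are integers, both weak inequalities must in fact be equalities, which gives simultaneously $\delta_u^{\sigma(u)}(\sigma) = \delta_u^{\sigma(x)}(\sigma) = \delta_u^{\sigma'(u)}(\sigma)$ (using $\sigma'(u) = \sigma(x)$) and $B = 0$. The ``moreover'' part then follows: any $v \in C$ with $\sigma(v) \neq \sigma(x)$ and $\sigma'(v) = \sigma(x)$ satisfies $v \neq x$, so if additionally $\{u,v\}\in E$, then $v$ would contribute at least $1$ to $B$, contradicting $B = 0$; hence $\{u,v\} \notin E$. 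The main subtlety I expect is the invocation of minimality to force $x$ to recolor: without it, $x$ could keep color $\sigma(x)$ in $\sigma'$, which would spoil the post-deviation count of $u$'s neighbors of color $\sigma(x)$ and break the integrality squeeze.
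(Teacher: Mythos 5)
Your proof is correct and follows essentially the same argument as the paper's: sandwich $\delta_u^{\sigma(u)}(\sigma)$ between the Nash-equilibrium lower bound and the strict-improvement upper bound, then use integrality and condition (iii) to collapse the squeeze, with the ``moreover'' part falling out of the resulting exact count. The only cosmetic difference is your invocation of minimality to force $x$ to recolor, which is valid but unnecessary — the paper instead uses the one-sided bound $\delta_u^{\sigma'(u)}(\sigma') \geq \delta_u^{\sigma'(u)}(\sigma) - 1$ (at most one neighbor, namely $x$, can leave color $\sigma'(u)$), and indeed in your notation the squeeze $A + B + \epsilon < \delta_u^{\sigma(u)}(\sigma) \leq A+1$ already yields $\epsilon = 0$, i.e.\ that $x$ recolors, as a conclusion rather than a prerequisite.
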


\begin{proof} Since $\sigma$ is an NE we know that $u$ cannot improve its utility by deviating alone to $\sigma'(u)$ which implies:
\[
\delta_u^{\sigma(u)}(\sigma) \leq \delta_u^{\sigma'(u)}(\sigma).
\]

By (iii) we know that, moving from $\sigma$ to $\sigma'$, the only neighbor of $u$ which leaves $\sigma'(u)$ is $x$ which means that its new neighbors colored $\sigma'(u)$ are at least $\delta^{\sigma'(u)}_u(\sigma) - 1$, because, a priori, other players could move to the same strategy in $\sigma'$, hence
$\delta^{\sigma'(u)}_u(\sigma') \geq \delta^{\sigma'(u)}_u(\sigma) - 1$. Moreover, player $u$ strictly improves its utility, which means $\delta^{\sigma'(u)}_u(\sigma')<\delta_u^{\sigma(u)}(\sigma)$. Using both inequalities we obtain: 
\[
\delta_u^{\sigma(u)}(\sigma) > \delta_u^{\sigma'(u)}(\sigma) -1.
\]
As a consequence, we have $\delta_u^{\sigma(u)}(\sigma) \ge \delta_u^{\sigma'(u)}(\sigma)$, and hence $\delta_u^{\sigma(u)}(\sigma) = \delta_u^{\sigma'(u)}(\sigma)$, which in turn implies  in particular that $\delta^{\sigma'(u)}_u(\sigma') = \delta^{\sigma'(u)}_u(\sigma) - 1$, i.e. $u$'s utility improves exactly by one. Thus, given a player $v$ like in the hypothesis, if $\{u,v\} \in E$ then $u$'s utility would not increase after the deviation.
\end{proof}

Proposition \ref{prop:propminimalcoalition} and Lemma \ref{lemma:2} can be used to prove the following proposition, which shows that when the size of a deviating coalition $C$ is related in a certain way to the number of colors used by the players in $C$, then the improving deviation always increases the size of the cut.

\begin{prop}\label{prop:propcoloring}
Let $\sigma$ be an NE and let $C$ be a minimal coalition which would profit by deviating from $\sigma$ to $\sigma'$.
If $|C_K(\sigma)| \in \{2, \lvert C\rvert - 1, \lvert C\rvert \}$, then the deviation strictly improves the size of the cut.
\end{prop}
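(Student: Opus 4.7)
The plan is to control the total change $\Delta S = S(\sigma')-S(\sigma)$ in the cut by splitting it into contributions from edges with both endpoints in $C$ and edges with exactly one endpoint in $C$. Writing $\Delta_{\mathrm{in}}$ and $\Delta_{\mathrm{bd}}$ for the respective changes in the number of cut edges of each type, we will have $\Delta S = \Delta_{\mathrm{in}} + \Delta_{\mathrm{bd}}$, while summing the utility variations over the coalition yields $\sum_{u\in C}\bigl(\mu_u(\sigma')-\mu_u(\sigma)\bigr) = 2\Delta_{\mathrm{in}} + \Delta_{\mathrm{bd}}$, since each internal edge contributes to two coalition utilities and each boundary edge to exactly one. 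Subtracting these two identities will give the key equation $\Delta S = \sum_{u\in C}\bigl(\mu_u(\sigma')-\mu_u(\sigma)\bigr) - \Delta_{\mathrm{in}}$, which is the engine of the proof.

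Two preliminary observations will make the argument very compact. First, Proposition~\ref{prop:propminimalcoalition}(i) gives $C_K(\sigma)=C_K(\sigma')$. Second, minimality of $C$ forces every $v\in C$ to strictly change color, for otherwise letting $C\setminus\{v\}$ deviate to $\sigma'_{C\setminus\{v\}}$ produces exactly the profile $\sigma'$, so $C\setminus\{v\}$ would itself be an improving coalition, contradicting the minimality of $C$. Since the graph is unweighted, utilities are integer-valued and every strict improvement is by at least $1$, so $\sum_{u\in C}\bigl(\mu_u(\sigma')-\mu_u(\sigma)\bigr) \geq |C|$. It therefore suffices to prove $\Delta_{\mathrm{in}} \leq |C|-1$ in each of the three cases.

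The final step will be a short case analysis on $|C_K(\sigma)|$. If $|C_K(\sigma)| = |C|$, all players in $C$ carry pairwise distinct colors in both $\sigma$ and $\sigma'$, so every internal edge is bichromatic in both profiles and $\Delta_{\mathrm{in}}=0$. If $|C_K(\sigma)|=2$ with colors $a,b$, the equality $C_K(\sigma')=\{a,b\}$ together with the forced color change forces each player to swap to the other color, so same-colored pairs stay monochromatic and opposite-colored pairs stay bichromatic, again giving $\Delta_{\mathrm{in}}=0$. The most delicate subcase is $|C_K(\sigma)|=|C|-1$: pigeonhole implies that in each of $\sigma_C$ and $\sigma'_C$ exactly one color is used twice and all others once, so at most one internal edge of $G(C)$ is monochromatic in each profile and hence $|\Delta_{\mathrm{in}}|\leq 1 \leq |C|-1$ (note $|C|\geq 2$ here). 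In all three cases this will yield $\Delta S \geq 1 > 0$. I expect the main obstacle to be precisely this last case, where a monochromatic internal edge may genuinely appear or disappear; the crux is to match the pigeonhole bound $|\Delta_{\mathrm{in}}|\leq 1$ against the $|C|\geq 2$ lower bound coming from the integrality of the strict improvements.
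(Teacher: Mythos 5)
Your proof is correct, and it takes a genuinely different route from the paper's. The paper argues edge-by-edge: it first establishes Lemma~\ref{lemma:2} (a deviating player who benefits from the departure of a single neighbour satisfies $\delta_u^{\sigma(u)}(\sigma)=\delta_u^{\sigma'(u)}(\sigma)$, improves by exactly one, and cannot be adjacent to a player moving onto its new colour), settles the case $|C_K(\sigma)|=2$ by citing \cite{DBLP:conf/wine/GourvesM09}, and disposes of $|C_K(\sigma)|=\lvert C\rvert$ and, via a three-way sub-case analysis on which pairs of players coincide in $\sigma$ and $\sigma'$, of $|C_K(\sigma)|=\lvert C\rvert-1$. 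You replace all of this with a single double-counting identity: subtracting $\Delta S=\Delta_{\mathrm{in}}+\Delta_{\mathrm{bd}}$ from $\sum_{u\in C}\bigl(\mu_u(\sigma')-\mu_u(\sigma)\bigr)=2\Delta_{\mathrm{in}}+\Delta_{\mathrm{bd}}$ gives $\Delta S=\sum_{u\in C}\Delta\mu_u-\Delta_{\mathrm{in}}$, after which integrality of unweighted utilities yields $\sum_{u\in C}\Delta\mu_u\ge\lvert C\rvert$, and Proposition~\ref{prop:propminimalcoalition}(i) plus pigeonhole yields $\Delta_{\mathrm{in}}=0$ when $|C_K(\sigma)|\in\{2,\lvert C\rvert\}$ and $\lvert\Delta_{\mathrm{in}}\rvert\le 1\le\lvert C\rvert-1$ when $|C_K(\sigma)|=\lvert C\rvert-1$. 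Each ingredient checks out: your forced-colour-change observation is sound (if $\sigma'(v)=\sigma(v)$ then $C\setminus\{v\}$ deviating to $\sigma'_{C\setminus\{v\}}$ produces the same profile and all its members strictly improve, contradicting minimality exactly as the paper itself uses it in the appendix proof of Proposition~\ref{prop:propminimalcoalition}); in the $|C_K(\sigma)|=\lvert C\rvert$ case the equality $C_K(\sigma')=C_K(\sigma)$ forces $\sigma'_C$ to be a bijection onto the colour set, and in the $\lvert C\rvert-1$ case exactly one colour is doubled in each profile. Your restriction to unweighted graphs is legitimate since the whole section assumes it (the paper's citation for the two-colour case covers weighted max cut, which you do not need). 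Two remarks on the trade-off: your argument never uses the NE hypothesis at all, only minimality, so it proves a slightly stronger statement and is quantitatively sharper ($\Delta S\ge\lvert C\rvert-1$ rather than merely $\Delta S>0$); on the other hand, the paper's Lemma-\ref{lemma:2} machinery also extracts local structural facts (exact $+1$ improvements, non-adjacency of certain co-movers) that are reused verbatim in the proofs of Theorem~\ref{prop:propcoalition} and Proposition~\ref{prop:girth}, which your global accounting does not provide.
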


Gourv{\`{e}}s and Monnot \cite{DBLP:conf/wine/GourvesM09} show that in weighted graphs an optimal solution is always a $3$-strong equilibrium, that is, it is resilient to any joint deviation by at most three players. Proposition \ref{prop:propcoloring} already extends this result since it implies that unweighted graphs admit a potential function when minimal coalitions of at most four players are allowed to deviate, implying that $4$-SE always exists. We now prove that the cut value is a potential function even when the deviation is extended to coalitions of size at most five. This implies that a $5$-SE always exists in unweighted graphs.

\begin{theorem}\label{prop:propcoalition}
Any optimal strategy profile is a 5-SE.
\end{theorem}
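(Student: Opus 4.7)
My plan is to show that, when restricted to minimal strong improvements by coalitions of size at most five starting from a Nash equilibrium, the cut value $S(\cdot)$ acts as a potential function. The theorem then follows at once: any optimal coloring $\sigma^\star$ is necessarily an NE (otherwise a unilateral improvement would increase $S$, contradicting optimality), and any strong improvement of $\sigma^\star$ by a coalition of size at most five contains a minimal improving sub-coalition of size at most five, which would also strictly increase $S$ and contradict optimality.

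I would first dispose of the trivial sub-case $|C_K(\sigma)|=1$: by Proposition~\ref{prop:propminimalcoalition}(i) we have $|C_K(\sigma')|=1$ as well, so every edge inside $C$ is monochromatic both in $\sigma$ and in $\sigma'$, and each deviator's utility gain coincides with the gain it would obtain by moving alone; this contradicts $\sigma$ being an NE. For $|C|\le 4$ with $|C_K(\sigma)|\ge 2$, Proposition~\ref{prop:propcoloring} applies directly since the remaining values $|C_K(\sigma)|\in\{2,|C|-1,|C|\}$ are exactly the ones it covers.

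The genuinely new case is $|C|=5$ with $|C_K(\sigma)|=3$, and by Proposition~\ref{prop:propminimalcoalition}(i) also $|C_K(\sigma')|=3$. I would split according to the size multisets of the color partitions of $C$ induced by $\sigma$ and by $\sigma'$, each belonging to $\{(3,1,1),(2,2,1)\}$, yielding four sub-cases (two of which are symmetric under swapping $\sigma$ and $\sigma'$). In every sub-case the argument combines three ingredients: (a)~the NE inequality $\delta_u^{\sigma'(u)}(\sigma)\ge\delta_u^{\sigma(u)}(\sigma)$ for each deviator $u$; (b)~Lemma~\ref{lemma:2}, which sharpens this to equality whenever $u$ has a unique intra-coalition neighbor colored $\sigma'(u)$ in $\sigma$ and which moreover forbids the edge $\{u,v\}$ between $u$ and any other deviator $v$ moving to the same color; and (c)~the minimality of $C$, used to exclude configurations in which a proper sub-coalition would already improve on its own. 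Accounting separately for the edges inside $C$ and for the edges between $C$ and $V\setminus C$, one shows that in each configuration the number of edges entering the cut strictly exceeds the number leaving it, so $S(\sigma')>S(\sigma)$.

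The hard part will be precisely this $(|C|,|C_K|)=(5,3)$ case analysis. Although the five-vertex subgraph $G(C)$ admits only a few topologically distinct minimal configurations, the danger is that the gains forced by five strict utility increments get absorbed by newly-monochromatic intra-coalition edges; the burden of the proof is to show, via a joint use of Lemma~\ref{lemma:2} and minimality, that such a pattern of same-color adjacencies among the deviators cannot coexist with a minimal strong improvement.
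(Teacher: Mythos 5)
Your skeleton is exactly the paper's: an optimal coloring is an NE, any improving coalition of size at most five contains a minimal improving sub-coalition, Proposition~\ref{prop:propcoloring} kills the cases $|C_K(\sigma)|\in\{2,|C|-1,|C|\}$ (so everything with $|C|\le 4$, after the vacuous $|C_K(\sigma)|=1$ case), and what remains is $|C|=5$ with $|C_K(\sigma)|=3$, to be settled by combining the NE inequality, Lemma~\ref{lemma:2}, and minimality so that the cut strictly increases, contradicting optimality. But you have not actually proved the theorem: the whole content of the remaining case --- verifying, configuration by configuration, that the same-color adjacencies forced among deviators cannot absorb the unit gains --- is announced as ``the burden of the proof'' and never discharged. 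In the paper this is a full page of argument: for the source shape $(3,1,1)$ one observes that at least two of the three like-colored nodes must again share a color in $\sigma'$ and uses Lemma~\ref{lemma:2} twice to rule out every edge inside any group that is monochromatic in $\sigma'$; for the source shape $(2,2,1)$ one splits on the shape of $\sigma'$ and, e.g., in the subcase where the lone node $y$ becomes paired with a neighbor $u$, one must pin down the exact forced pattern ($\sigma'(y)=\sigma(w)$ with edges $\{y,w\},\{y,x\}$) before the edge count goes through. This is not routine bookkeeping; it is where the theorem lives, so as it stands your text is a correct plan rather than a proof.

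Moreover, one concrete step of the plan would fail: you propose to collapse two of the four shape sub-cases ``by symmetry under swapping $\sigma$ and $\sigma'$.'' There is no such symmetry. The profile $\sigma$ is a Nash equilibrium and utilities strictly increase from $\sigma$ to $\sigma'$, and both facts enter asymmetrically: the inequality $\delta_u^{\sigma(u)}(\sigma)\le\delta_u^{\sigma'(u)}(\sigma)$ and the hypotheses of Lemma~\ref{lemma:2} are only available with $\sigma$ in the role of the equilibrium, while $\sigma'$ need not be an NE at all. Accordingly, the paper treats the transitions $(2,2,1)\to(3,1,1)$ and $(2,2,1)\to(2,2,1)$ as genuinely separate cases, and handles the source shape $(3,1,1)$ by a direct argument that never inspects the target shape; when you carry out the case analysis, you will have to argue each ordered pair of shapes on its own terms.
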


\section{Local strong equilibria}\label{sec:local_SE}

In this section we introduce and discuss local strong equilibria. As our main result, we show that, for any $k$, such an equilibrium always exists. Interestingly enough, our analysis also provides a characterization of the set of local strong equilibria which relates them to $q$-SE. 

Let $C\subseteq V$ be a set of players. We say that $C$ is an \emph{$x$-local coalition} if the distance in $G$ between any two players in $C$ is at most $x$. Moreover, we define an \emph{$x$-Local Strong Equilibrium} ($x$-LSE) to be a coloring in which no $x$-local coalition can profitably deviate. In this section, we will consider only the case $x = 1$, that is, the coalition $C$ induces a clique. We will use LSE in place of $1$-LSE.

Let us introduce some additional notation. Given a node $u$ and a strategy profile $\sigma$, we denote by $c_u(\sigma)$ the \emph{cost} of $u$ in $\sigma$, namely the number of neighbors of $u$ that have the same color of $u$ in $\sigma$, i.e. $\delta_u^{\sigma(u)}(\sigma)$. Notice that $c_u(\sigma)=\delta_u - \mu_u(\sigma)$. Given a coalition $C$, we also define $c_{u,C}(\sigma) = |\{(u,v)\in E |v \in C, \sigma(v)=\sigma(u)\}|$.

We now prove a technical lemma which gives some necessary conditions for a clique to deviate profitably from an NE.

\begin{lemma}\label{lm:tech1}
Let $\sigma$ be an NE. Suppose there exists a deviation $\sigma'$ such that all the members of $C$ can lower their cost changing from $\sigma$ to $\sigma'$. The following conditions must hold :
\begin{enumerate}
\item[(i)]$|C_i(\sigma)|=|C_i(\sigma')|$ for all $i=1,\ldots,k$;
\item[(ii)] $c_u(\sigma)-c_u(\sigma')=1 \;$ for all $u \in C$;
\item[(iii)] $c_u(\sigma)=c_u(\sigma_{-u},\sigma'(u))\;$ for all $u \in C$.
\end{enumerate}
\end{lemma}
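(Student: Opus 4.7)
My plan is to exploit the clique structure of the $1$-local coalition $C$: every $u\in C$ is adjacent to all other members of $C$, while the neighbors of $u$ outside $C$ keep the same color under $\sigma$, under $\sigma'$, and under the unilateral deviation $(\sigma_{-u},\sigma'(u))$. Setting $f_u=\sigma(u)$, $g_u=\sigma'(u)$, $n_i(\tau):=|C_i(\tau)|$, and denoting by $c_u^{\mathrm{out}}(\sigma,i)$ the number of neighbors of $u$ in $V\setminus C$ that are colored $i$ in $\sigma$, the three costs I will compare split cleanly:
\[
c_u(\sigma)=c_u^{\mathrm{out}}(\sigma,f_u)+n_{f_u}(\sigma)-1,\qquad c_u(\sigma')=c_u^{\mathrm{out}}(\sigma,g_u)+n_{g_u}(\sigma')-1,
\]
\[
c_u(\sigma_{-u},g_u)=c_u^{\mathrm{out}}(\sigma,g_u)+|C_{g_u}(\sigma)\setminus\{u\}|.
\]

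For each $u\in C$ I will then chain two facts: the NE inequality $c_u(\sigma_{-u},g_u)\ge c_u(\sigma)$, and the integer strict improvement $c_u(\sigma)\ge c_u(\sigma')+1$ (costs are integer-valued in the unweighted setting). The $c_u^{\mathrm{out}}$ terms cancel perfectly and what remains is the single clean relation
\[
|C_{g_u}(\sigma)\setminus\{u\}|\ \ge\ n_{g_u}(\sigma') \qquad(\star)
\]
for every $u\in C$. Since $|C_{g_u}(\sigma)\setminus\{u\}|$ equals $n_{g_u}(\sigma)$ when $g_u\neq f_u$ and $n_{g_u}(\sigma)-1$ when $g_u=f_u$, $(\star)$ is equivalent to $n_{g_u}(\sigma')\le n_{g_u}(\sigma)$, strict whenever $g_u=f_u$.

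To deduce (i), I will lift this bound from the visited colors to every color $i\in K$. If some player of $C$ moves into color $i$, the bound follows from $(\star)$; if none does, then $n_i(\sigma')\le n_i(\sigma)$ is immediate since only departures can change $n_i$. Summing $n_i(\sigma')\le n_i(\sigma)$ over $i$ and using $\sum_i n_i(\sigma)=\sum_i n_i(\sigma')=|C|$ then forces equality in every coordinate, which is (i).

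Finally, plugging (i) back into $(\star)$ rules out $g_u=f_u$ for every $u$, since that would require $n_{g_u}(\sigma)-1\ge n_{g_u}(\sigma)$; hence every member of $C$ actually changes color and $(\star)$ is an equality for each $u$. Retracing how $(\star)$ was obtained, this equality forces both the NE inequality and the improvement inequality to be simultaneously tight, yielding (iii) and (ii) respectively. The only delicate point I anticipate is the bookkeeping that uniformly handles the $g_u=f_u$ sub-case while deriving $(\star)$ and while summing up to get (i); once that is in place, the rest is pure cancellation and a counting argument.
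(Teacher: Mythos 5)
Your proof is correct and follows essentially the same route as the paper's: the same decomposition of costs into inside-coalition and outside-coalition parts, the chained NE and strict (integer) improvement inequalities with cancellation of the outside terms, the resulting per-color inequality $|C_i(\sigma')|\le|C_i(\sigma)|$ summed over colors to force (i), and tightness of the chain to extract (ii) and (iii). If anything, your explicit bookkeeping for the sub-case $\sigma'(u)=\sigma(u)$ --- which the paper's identity $c_{u,C}(\sigma_{-u},\nu)=|C_\nu(\sigma)|$ silently assumes away, and which your argument then rules out via (i) --- is a small gain in rigor over the paper's version.
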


The following lemma underlines a very interesting property of deviating cliques, which allows us to study only cliques formed by at most $k$ players.

\begin{lemma}\label{lm:tech2}
Let $C$ be a clique which profits by deviating from an NE $\sigma$ to $\sigma'$. Then there exists $j \leq |C_K(\sigma)|$ and a subcoalition $C'=\{u_i,\ldots,u_j\} \subseteq C$ whose players can improve their payoffs by deviating alone to the strategy they use in $\sigma'$.
Moreover it holds :
\begin{enumerate}
\item[(i)] $\sigma'(u_i)=\sigma(u_{i+1})$ for all $i=1,\ldots,j-1$;
\item[(ii)] $\sigma'(u_{j})=\sigma(u_1)$.
\end{enumerate}

\begin{proof}
Consider a player $v_1 \in C$ and assume without loss of generality that $\sigma(v_1)=1$ and $\sigma'(v_1)=2$.
By Lemma \ref{lm:tech1}, there is at least one player, say $v_2$, in $C$ such that $\sigma(v_2)=2$. If $\sigma'(v_2)=1$, then $C'=\{v_1,v_2\}$. Otherwise, call without loss of generality $\sigma'(v_2)=3$. Then, once again by Lemma \ref{lm:tech1} there is a node in $C$, say $v_3$, with $\sigma(v_2)=2$. We can iterate this argument until we get a node $v_h$ with $\ell :=\sigma'(v_h) \in \{1,2,\dots,h-1\}$. 
We set $C'=\{v_\ell,v_{\ell+1},\dots,v_{h} \}$ and set $j=|C'|$. Notice that $C'$ already satisfies the properties 1 and 2 of the statement of the lemma (observe also that it could be $C'=C$). 

Let $\sigma^*$ be the strategy profile in which the players in $C'$ play as in $\sigma'$ while the others play as in $\sigma$. It remains to show that all players in $C'$ is improving its utility by changing from $\sigma$ to $\sigma^*$.
Let $\nu=\sigma^*(v_i)$. By definition of $\sigma^*$, we claim that $c_{v_i}(\sigma^*) = c_{v_i}(\sigma_{-v_i}, \nu) -1$. This is true because there is exactly one player in $C'$ that leaves color $\nu$ and thus the number of $v_i$'s neighbors with such a color decreases by 1. Hence,
\begin{align*}
c_{v_i}(\sigma^*) & = c_{v_i}(\sigma_{-v_i},\nu)-1 \\
& = c_{v_i}(\sigma_{-v_i},\sigma'(v_i))-1 \\
& = c_{v_i}(\sigma)-1,
\end{align*}
\noindent where in the last equality we used property (iii) of Lemma \ref{lm:tech1} on $\sigma'$.
\end{proof}
\end{lemma}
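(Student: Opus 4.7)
The plan is to extract from $C$ a subcoalition $C'$ whose members form a cyclic chain of color-swaps, and then verify that when only the members of $C'$ deviate together to their $\sigma'$-strategies, each of them strictly improves. The two ingredients driving everything are property (i) of Lemma \ref{lm:tech1}, which says the multiset of colors used by $C$ is preserved from $\sigma$ to $\sigma'$, and property (iii), which will let me cash in the equality $c_u(\sigma)=c_u(\sigma_{-u},\sigma'(u))$ for each $u$ in the cycle.

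To build $C'$ I would proceed greedily. Pick any player $v_1 \in C$ that changes color, say $\sigma(v_1)=1$ and $\sigma'(v_1)=2$; such a $v_1$ exists since $C$ is actually improving. Because $|C_2(\sigma)|=|C_2(\sigma')|$ by Lemma \ref{lm:tech1}(i), some $v_2 \in C$ must leave color $2$, that is $\sigma(v_2)=2$ and $\sigma'(v_2)\neq 2$. If $\sigma'(v_2)=\sigma(v_1)$ the chain closes at length $2$; otherwise iterate. Each newly introduced $v_i$ has an old color not previously appearing in the chain, so after at most $|C_K(\sigma)|$ steps the chain must close on an already used color. Taking the smallest index $\ell$ such that the closing color equals $\sigma(v_\ell)$ and relabeling indices, I obtain $C'=\{v_1,\dots,v_j\}$ with $j \leq |C_K(\sigma)|$ satisfying properties (i) and (ii) of the statement.

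To verify the improvement, let $\sigma^*$ be the profile in which exactly the players of $C'$ play as in $\sigma'$ and everyone else plays as in $\sigma$. Fix $v_i \in C'$ and set $\nu=\sigma'(v_i)=\sigma(v_{i+1})$ (cyclically, with $v_{j+1}=v_1$). I compare $c_{v_i}(\sigma^*)$ with $c_{v_i}(\sigma_{-v_i},\nu)$: by the cycle structure, among the other members of $C'$, $v_{i+1}$ is the unique player that was colored $\nu$ in $\sigma$ and leaves that color in $\sigma^*$, while no $C'$-player ever joins color $\nu$. Since $C$ is a clique, $\{v_i,v_{i+1}\}$ is an edge, hence $c_{v_i}(\sigma^*)=c_{v_i}(\sigma_{-v_i},\nu)-1$. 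Applying property (iii) of Lemma \ref{lm:tech1} to $v_i$ then yields $c_{v_i}(\sigma^*)=c_{v_i}(\sigma)-1$, which is the required strict improvement.

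The main obstacle I expect is the bookkeeping in this last step: I must be sure that in the passage from $(\sigma_{-v_i},\nu)$ to $\sigma^*$ exactly one clique-neighbor of $v_i$ alters its relationship with color $\nu$, and in the direction favorable to $v_i$. This is precisely what the cycle property buys, since in a cyclic chain of color-swaps each color appearing in the chain is entered by exactly one $C'$-player and left by exactly one $C'$-player, so the net change collapses to the single $-1$ coming from $v_{i+1}$ abandoning $\nu$. A minor preliminary point is making sure that the initial $v_1$ exists at all, which is immediate because if no player of $C$ changed color then $\sigma=\sigma'$ and $C$ could not be profiting.
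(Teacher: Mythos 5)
Your proof is correct and takes essentially the same approach as the paper: the same greedy chain construction driven by Lemma \ref{lm:tech1}(i) until the chain closes into a color cycle $C'$ of length at most $|C_K(\sigma)|$, and the same verification that in $\sigma^*$ exactly one member of $C'$ (namely $v_{i+1}$) leaves color $\nu=\sigma'(v_i)$, combined with Lemma \ref{lm:tech1}(iii) to get $c_{v_i}(\sigma^*)=c_{v_i}(\sigma)-1$. Your explicit observations --- that the clique property guarantees $\{v_i,v_{i+1}\}\in E$, and that a color-changing $v_1$ exists since otherwise $\sigma=\sigma'$ --- merely spell out steps the paper leaves implicit.
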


Lemma \ref{lm:tech2} allows us to prove the main result of this section, which is the following:

\begin{theorem}
Any optimal strategy profile is an LSE.
\end{theorem}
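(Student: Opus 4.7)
The plan is a proof by contradiction. Assume $\sigma$ is an optimal strategy profile but is not an LSE; then there exist a clique $C$ and a coloring $\sigma'$ under which every player of $C$ strictly improves its utility. Since every optimal profile is a Nash equilibrium, Lemma \ref{lm:tech2} applies and yields a subcoalition $C'=\{u_1,\dots,u_j\}\subseteq C$ such that $\sigma'(u_i)=\sigma(u_{i+1})$ for $i<j$ and $\sigma'(u_j)=\sigma(u_1)$. By construction of $C'$, the colors $\sigma(u_1),\dots,\sigma(u_j)$ are pairwise distinct, and $\sigma'$ restricted to $C'$ is a cyclic permutation of $\sigma$ restricted to $C'$; moreover $j\geq 2$ because the sequence $u_1\mapsto \sigma'(u_1)$ is already a nontrivial color change.

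Next, I would consider the profile $\sigma^*$ obtained from $\sigma$ by letting only the players in $C'$ switch to their $\sigma'$-strategies, leaving every player outside $C'$ unchanged. The argument in the proof of Lemma \ref{lm:tech2} already establishes that each $u\in C'$ decreases its cost by exactly $1$ in passing from $\sigma$ to $\sigma^*$, i.e. $\mu_u(\sigma^*)=\mu_u(\sigma)+1$. The rest of the proof consists in showing that this local unit-improvement is inherited by the global cut value, contradicting the optimality of $\sigma$.

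For this, I would partition $E$ into (a) edges internal to $C'$, (b) edges between $C'$ and $D:=V\setminus C'$, and (c) edges internal to $D$. Since $C\supseteq C'$ is a clique, so is $C'$; since the $\sigma$-colors of $C'$ are pairwise distinct and $\sigma^*$ acts on $C'$ by a color permutation, every type-(a) edge is bichromatic under both $\sigma$ and $\sigma^*$, and therefore contributes nothing to $S(\sigma^*)-S(\sigma)$. Edges of type (c) are unaffected because none of their endpoints move. Hence all changes in the cut come from type-(b) edges, and each such edge is incident to exactly one vertex in $C'$, so it is counted once in the sum $\sum_{u\in C'}\mu_u(\cdot)$. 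Combining these observations,
\[
S(\sigma^*)-S(\sigma)\;=\;\sum_{u\in C'}\bigl(\mu_u(\sigma^*)-\mu_u(\sigma)\bigr)\;=\;|C'|\;\geq\;2,
\]
which contradicts the optimality of $\sigma$ and completes the proof.

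The main obstacle, and the reason Lemma \ref{lm:tech2} is essential, is the possibility of hidden losses inside the coalition: a generic profitable deviation of the full clique $C$ may turn formerly bichromatic internal edges of $C$ into monochromatic ones, so the per-player utility gain need not translate into a global cut increase. Passing to the cyclic sub-coalition $C'$ is exactly what removes this obstruction, because on $C'$ the deviation is merely a color permutation and every internal edge stays bichromatic; the unit improvement at each $u\in C'$ is then paid for entirely by edges crossing to $D$, which are precisely the edges that contribute to the growth of the cut.
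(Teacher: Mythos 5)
Your proof is correct and follows essentially the same route as the paper's: both invoke Lemma \ref{lm:tech2} to pass to the cyclic subcoalition, observe that all edges internal to the (sub)clique remain bichromatic under both colorings, and convert the per-player unit improvements (guaranteed exact by Lemma \ref{lm:tech1}(ii), or equivalently by the computation inside the proof of Lemma \ref{lm:tech2}) into a strict cut increase of exactly $|C'|$ via the disjointness of the edge sets toward non-coalition vertices, contradicting optimality. Your explicit three-way edge partition and the closing remark about why the reduction to the cyclic subcoalition is needed are just a more spelled-out version of the paper's argument, not a different one.
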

\begin{proof}
Let $\sigma$ be an optimal strategy profile and assume $\sigma$ is not an LSE. Clearly, $\sigma$ is an NE. Then there exists a coalition $C=\{u_1,\dots,u_j\}$ of $j\le k$ players and a strategy profile $\sigma'$ which satisfy the conditions of Lemma \ref{lm:tech2}. We will show that the size of the cut increases by exactly $j$ from $\sigma$ to $\sigma'$, which is a contradiction.

First of all, observe that all the edges between players of $C$ are in the cut both in $\sigma$ and $\sigma'$. Moreover, from property (ii) of Lemma \ref{lm:tech1}, we have that the utility of each $u_i \in C$ increases exactly by one. Let $E_i=\{\{u_i,v\} | v \notin C\}$. As a consequence, we have that, for each $u_i \in C$, the number of edges in $E_i$ crossing the cut increases by exactly one. Since $E_i$ and $E_j$ are disjoint for $i \neq j$, the size of the cut increases exactly by $j$ from $\sigma$ to $\sigma'$.
\end{proof}

\begin{figure}[t]
\begin{center}
\includegraphics[scale=0.4]{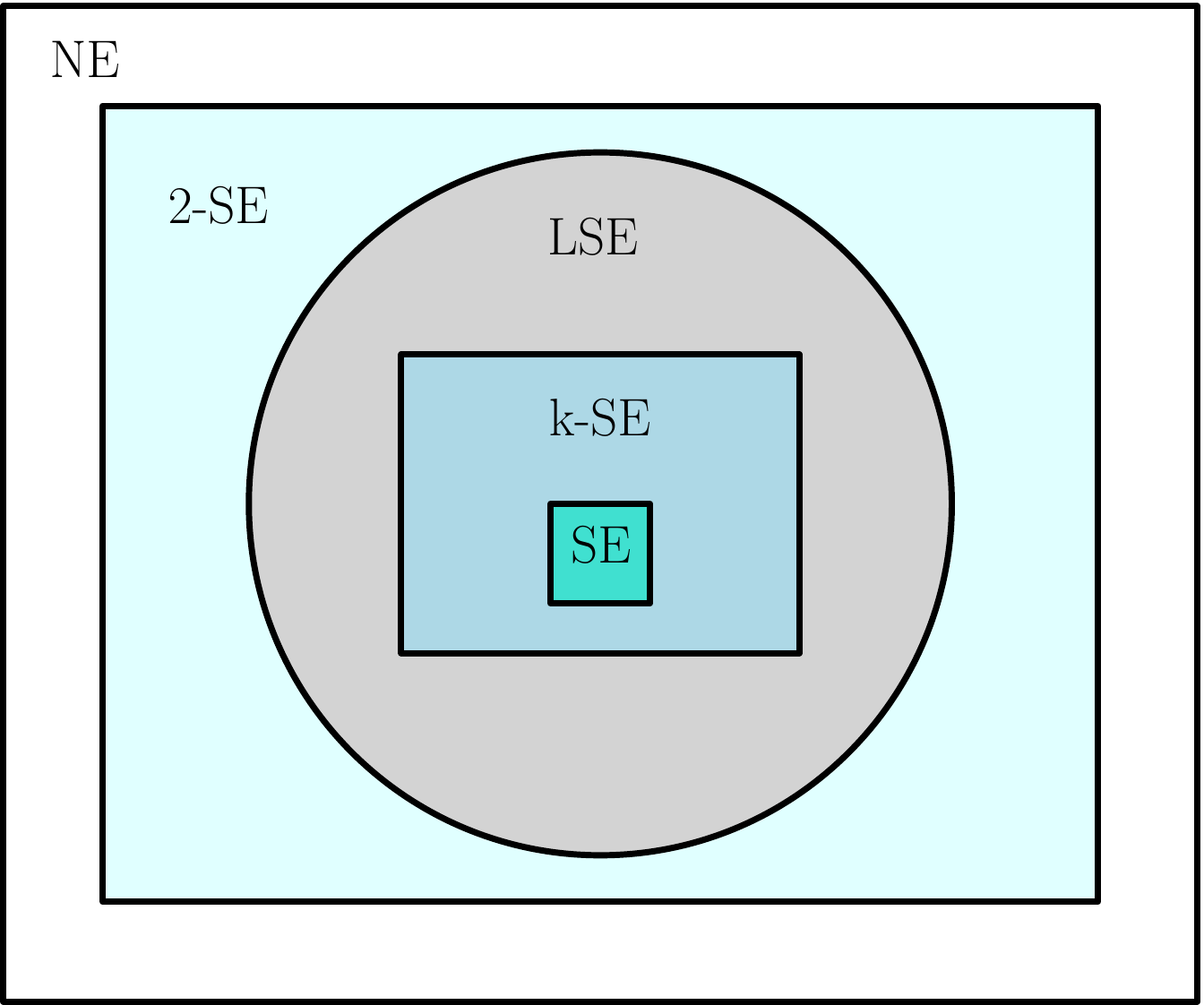}\caption{Equilibria in the unweighted max-k-cut game}\label{fig:characterization}
\end{center}
\end{figure}

We conclude this section by discussing some consequences of our analysis about how LSE is related to $q$-SE: these results are depicted in Figure \ref{fig:characterization}.
 Some inclusions are straightforward from the definition of $q$-SE. Here we show that an LSE is always a 2-SE and a $k$-SE is always an LSE. Concerning the former fact, note that a coalition of 2 players can profitably deviate from an NE $\sigma$ if and only if there exists an edge between them. In fact, otherwise, they could profitably deviate alone from $\sigma$. This means that such a coalition is a clique of two players, and hence a local coalition. As far as the latter relation is concerned, we prove the following:

\begin{prop}
A $k$-SE is always an LSE.
\end{prop}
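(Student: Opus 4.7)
The plan is to prove this by contradiction, leveraging Lemma \ref{lm:tech2} to reduce a profitable clique deviation to a profitable deviation of a small subcoalition. Suppose $\sigma$ is a $k$-SE but not an LSE. Since every $k$-SE (with $k\ge 1$) is in particular an NE, $\sigma$ is an NE, and there exists a clique $C$ that can profitably deviate from $\sigma$ to some $\sigma'$.

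Apply Lemma \ref{lm:tech2} to $C$, $\sigma$, and $\sigma'$. This yields a subcoalition $C' = \{u_1,\dots,u_j\}\subseteq C$ with $j\le |C_K(\sigma)|$, together with the profile $\sigma^*$ (defined in the proof of Lemma \ref{lm:tech2}) in which the players of $C'$ switch to their strategies in $\sigma'$ while the remaining players play as in $\sigma$, and such that every player in $C'$ strictly improves its utility moving from $\sigma$ to $\sigma^*$. Crucially, since the set of available colors is $\{1,\dots,k\}$, we have $|C_K(\sigma)|\le k$, and therefore $|C'| = j \le k$.

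Thus $C'$ is a coalition of size at most $k$ whose joint deviation (namely, $\sigma^*_{C'}$) strictly improves the utility of every one of its members, contradicting the assumption that $\sigma$ is a $k$-SE. Hence no such clique $C$ can exist, and $\sigma$ must be an LSE.

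The main content here is really already packaged inside Lemma \ref{lm:tech2}: that lemma is exactly what lets us replace an arbitrary-size profitable clique deviation by one involving at most $|C_K(\sigma)|\le k$ players in a ``cyclic rotation'' of colors. Once one has that reduction, the implication $k\text{-SE}\Rightarrow\text{LSE}$ falls out immediately, so I expect no significant obstacle beyond correctly invoking the lemma and noting the bound $|C_K(\sigma)|\le k$.
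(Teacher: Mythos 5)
Your proof is correct and follows exactly the paper's argument: both reduce a profitable clique deviation, via Lemma \ref{lm:tech2}, to a profitable deviation by a subcoalition of at most $|C_K(\sigma)|\le k$ players, contradicting the $k$-SE assumption. Your version merely makes explicit two details the paper leaves implicit, namely that a $k$-SE is an NE (needed to invoke the lemma) and the bound $|C_K(\sigma)|\le k$.
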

\begin{proof}
Let $\sigma$ be a $k$-SE and, by contradiction, let $C$ be a clique which would profit deviating to $\sigma'$. By Lemma \ref{lm:tech2} there exists a minimal subcoalition $C'$ of at most $k$ players which can profit deviating alone, which is a contradiction.
\end{proof}

It is worth noticing that, as a consequence, when $k=2$ the set of LSE coincides exactly with the set of 2-SE. On the other hand, for $k \geq 3$ all inclusions are proper, as we show in \nameref{app:properinclusions}.

\section{Existence of SE for special cases}\label{sec:special cases}

In this section we show that an SE always exists for some special classes of unweighted graphs. More precisely, we prove that in graphs with large girth or large degree, any optimal strategy profile is an SE. It is worth noticing that, for general graphs, we have already proved that any optimal coloring is both a 5-SE and an LSE. We conjecture that it is indeed always an SE, even if this seems to be challenging to prove in general. A natural approach could be that of using the size of the cut as a strong potential function, that is $\Phi_S(\sigma) = S(\sigma)$, as it has already been done for proving that max k-cut games admit a Nash equilibrium \cite{DBLP:phd/de/Hoefer2007} \cite{DBLP:conf/sagt/KunPR13}.
However, it can be argued that this approach cannot work in general, since a profitable coalition deviation could sometimes result in a cut-value decrease. This is stated in the following proposition whose proof can be found in the appendix.

\begin{prop}\label{prop:phinopotential}
The size of the cut is not a strong potential function for the max $k$-cut game on unweighted graphs.
\end{prop}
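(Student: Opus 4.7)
The plan is to exhibit an explicit instance witnessing the failure of the potential property, namely a graph, a strategy profile $\sigma$, a coalition $C$, and a joint deviation to $\sigma'$ such that every member of $C$ strictly increases its utility but $S(\sigma') < S(\sigma)$. This directly negates the defining property of a strong potential function.

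To guide the search, I would first isolate a simple double-counting observation. Partition the edges incident to $C$ into \emph{internal} ones (both endpoints in $C$) and \emph{boundary} ones (exactly one endpoint in $C$). An internal edge that changes its cut-status contributes twice to $\sum_{u \in C}(\mu_u(\sigma')-\mu_u(\sigma))$ but only once to $\Delta S = S(\sigma')-S(\sigma)$, while a boundary edge contributes once to both. Hence, if every player in $C$ strictly improves (so the utility sum is at least $|C|$) and yet $\Delta S<0$, the deviation must create strictly more than $|C|$ internal cut-edges. This nudges the construction towards letting $C$ induce a dense \emph{monochromatic} clique in $\sigma$ that splits into several colour classes in $\sigma'$, while attaching private external neighbours so that the gain in internal cut-edges is overwhelmed by boundary losses.

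Concretely, I would take $k=3$ and let $C=\{u_1,\dots,u_5\}$ induce a $K_5$; attach two fresh degree-$1$ nodes to each of $u_1,u_2$ and one fresh degree-$1$ node to each of $u_3,u_4,u_5$. In $\sigma$, colour every $u_i$ with $1$, the pendants of $u_1,u_2$ with $2$, and the pendants of $u_3,u_4,u_5$ with $3$. Define $\sigma'$ to coincide with $\sigma$ on the pendants and set $\sigma'(u_1)=\sigma'(u_2)=2$ and $\sigma'(u_3)=\sigma'(u_4)=\sigma'(u_5)=3$. A direct count then gives $\mu_{u_i}(\sigma)=2$ and $\mu_{u_i}(\sigma')=3$ for $i\in\{1,2\}$, and $\mu_{u_i}(\sigma)=1$ and $\mu_{u_i}(\sigma')=2$ for $i\in\{3,4,5\}$, so every member of $C$ strictly improves its utility by exactly one. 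On the cut side, the $2$-$3$ split of the $K_5$ puts $2\cdot 3=6$ new edges into the cut, whereas all seven boundary edges, which crossed the cut in $\sigma$, become monochromatic in $\sigma'$; hence $\Delta S = 6-7 = -1 < 0$, the desired contradiction.

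The main delicacy is the calibration of the gadget: too few pendants and the boundary loss cannot compensate the internal gain, too many pendants and at least one player fails to strictly improve. The choice of $K_5$ with a $2$-$3$ partition together with pendant multiplicities $(2,2,1,1,1)$ is essentially tight, as the analogous attempt on $K_4$ or with other internal splits of $K_5$ yields a non-negative $\Delta S$.
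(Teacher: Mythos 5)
Your proposal is correct: the $K_5$-with-pendants instance is a valid counterexample, and your accounting identity (internal edges count twice in $\sum_{u\in C}\Delta\mu_u$ but once in $\Delta S$) correctly explains why the construction must split a dense monochromatic clique while sacrificing boundary edges; the arithmetic checks out ($\Delta S = 6-7 = -1$ with every member of $C$ gaining exactly $1$). The paper proves the same statement by an explicit counterexample as well, but of a different character: it uses a $K_7$ plus auxiliary nodes, with at least $7$ colors and a coalition of $12$ players, engineered so that the starting profile $\sigma$ is a Nash equilibrium and the deviating coalition is the \emph{unique minimal} improving coalition. That extra structure matters for the paper's narrative: immediately after the proposition the authors restrict attention to minimal coalitions (the approach of \cite{HarksKM09} discussed for Proposition \ref{prop:cyclepotential}), and their example shows the cut fails as a potential even under that restriction. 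Your instance does not establish this stronger fact --- your $\sigma$ is not an NE (each $u_i$ can unilaterally improve, e.g.\ $u_1$ switching to color $2$ gains utility $4>2$), and your coalition is far from minimal --- so it leaves open whether cut-size is a potential for minimal strong improvements on unweighted graphs. On the other hand, your example is considerably more economical: it works already for $k=3$ (hence for all $k\geq 3$, whereas the paper's needs $k\geq 7$), uses a $12$-node graph with fully explicit edges rather than assumptions about omitted edges, and comes with a clean double-counting lemma showing the construction is essentially tight. Both are valid proofs of the proposition as literally stated; if you wanted yours to match the paper's stronger content, you would need to add enough attachments to kill all unilateral and sub-coalitional improvements, which is exactly what the paper's larger gadget does.
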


Even though there exist strong improvements that can decrease $\Phi_S$, it does not mean that such function cannot be used in some interesting special setting. Indeed, there are cases in which $\Phi_S$'s value always increases after a strong improvement, that is, they admit a strong potential function. From now on we assume that only minimal coalitions can deviate.

\subsubsection{Bounded girth}

Given a graph $G$, let $\rho(G)$ be its \emph{girth}, that is the size of the minimum cycle. 
We show that a graph with girth $\rho(G)$ always admits a q-SE, for $q \leq 2\rho(G) - 3$. 
This implies that when $\rho(G) \geq \left(\lvert V\rvert + 3\right)/2$ then there always exists a strong equilibrium.

\begin{prop}\label{prop:girth}
Given an unweighted graph $G$ with girth $\rho(G)$ and any number of colors $k$, an optimal coloring is a $\left(2\rho(G) - 3\right)$-SE.
\end{prop}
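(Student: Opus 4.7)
The plan is to argue by contradiction. Fix an optimal coloring $\sigma$, write $\rho:=\rho(G)$, and suppose some minimal coalition $C$ with $|C|\le 2\rho-3$ profitably deviates to $\sigma'$. Let $a$ be the net change in the number of cut edges with both endpoints in $C$, and $b$ the same quantity for edges with exactly one endpoint in $C$. Since the graph is unweighted and each $v\in C$ strictly improves by at least one unit, $2a+b=\sum_{v\in C}(\mu_v(\sigma')-\mu_v(\sigma))\ge |C|$, while optimality gives $a+b\le 0$; adding these yields $a\ge |C|$. Writing $a=|E^{in}_{+}|-|E^{in}_{-}|$ with $E^{in}_{+}$ the set of $G(C)$-edges monochromatic in $\sigma$ but bichromatic in $\sigma'$, we get $|E^{in}_{+}|\ge |C|$. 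The subgraph $H:=(C,E^{in}_{+})$ thus has at least as many edges as vertices and, being a subgraph of $G$, has girth at least $\rho$; hence $H$ contains a cycle $\Gamma$ of length $\ell\ge\rho$.

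Since every edge of $\Gamma$ is monochromatic in $\sigma$, all vertices of $\Gamma$ share a common color $i$ in $\sigma$, whereas adjacent vertices on $\Gamma$ receive distinct colors in $\sigma'$. If $\ell\ge 2\rho-2$ we are already done, for $|C|\ge \ell>2\rho-3$; so assume $\rho\le\ell\le 2\rho-3$. Under this bound $\Gamma$ has no chord in $G$: a chord at arc-distance $d$ would generate two cycles in $G$ of lengths $d+1$ and $\ell-d+1$, both required to be at least $\rho$, forcing $\ell\ge 2\rho-2$. In particular each $v\in\Gamma$ has exactly two $\Gamma$-neighbors.

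For each $v\in\Gamma$ set $j:=\sigma'(v)$, and let $x_v$, $z_v$, $y_v$ count the non-$\Gamma$ neighbors of $v$ colored $i$ in $\sigma$, colored $j$ in $\sigma$, and colored $j$ in $\sigma'$, respectively. Because every optimal coloring is an NE, $v$ cannot unilaterally improve by switching to $j$, and since the two $\Gamma$-neighbors contribute $2$ to $c_v(\sigma)$ and $0$ to $c_v(\sigma_{-v},j)$, the NE inequality reads $2+x_v\le z_v$. The strict improvement $c_v(\sigma')<c_v(\sigma)$ rewrites as $y_v<2+x_v$ (the two $\Gamma$-neighbors also contribute $0$ to $c_v(\sigma')$, because they are bichromatic to $v$ in $\sigma'$), so $y_v<z_v$. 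The net decrease $z_v-y_v\ge 1$ must be witnessed by some $w\in N_G(v)\setminus V(\Gamma)$ that leaves color $j$ in the deviation, and since only members of $C$ change color, $w\in C\setminus \Gamma$. Call such a $w$ a \emph{helper} of $v$.

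Finally, if a helper $w$ serves two distinct $v_1,v_2\in\Gamma$, then closing the path $v_1\,w\,v_2$ through the shorter $\Gamma$-arc from $v_2$ to $v_1$ produces a cycle of length $2+d_\Gamma(v_1,v_2)\ge\rho$, so $d_\Gamma(v_1,v_2)\ge\rho-2$. A subset of $C_\ell$ whose vertices are pairwise at distance $\ge\rho-2$ has cardinality at most $\lfloor \ell/(\rho-2)\rfloor\le \ell/(\rho-2)$, so each helper serves at most that many cycle vertices. The $\ell$ helper-requests therefore require at least $\ell/\lfloor\ell/(\rho-2)\rfloor\ge\rho-2$ distinct helpers, all in $C\setminus\Gamma$, giving $|C|\ge\ell+(\rho-2)\ge 2\rho-2$, contradicting $|C|\le 2\rho-3$. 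The main obstacle I expect is the helper step: it requires carefully combining the NE inequality (which supplies the crucial slack of $2$ from the two cycle-adjacent $\Gamma$-neighbors) with the strict-improvement inequality inside the coalition, and then arguing that the witness of the net decrease in $j$-colored neighbors of $v$ lives in $C\setminus\Gamma$; once the helpers are in hand, the girth of $G$ makes the final packing estimate on $C_\ell$ essentially automatic.
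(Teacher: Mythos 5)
Your proof is correct, but it follows a genuinely different route from the paper's. The paper argues structurally about the induced subgraph $G(C)$: since $|C|\le 2\rho(G)-3$, the girth bound forces $G(C)$ to contain at most two cycles sharing a path; minimality (together with the paper's Lemma on single-color departures) is used to prune away tree-like parts of the coalition and to show no edge of $G(C)$ is monochromatic after the deviation, and a degree-2/degree-3 edge count, which invokes the 5-SE theorem to assume $|C|>5$, then shows the cut strictly increases, contradicting optimality. You instead run a global accounting argument: from $2a+b\ge |C|$ (strict integer improvements) and $a+b\le 0$ (optimality) you get $a\ge |C|$ — note this is subtracting, not ``adding'', the two inequalities — hence at least $|C|$ internal edges flip from monochromatic to bichromatic, which forces a monochromatic cycle $\Gamma$ of length $\ell\ge\rho$ inside $C$; chordlessness then follows from the girth, the NE-versus-strict-improvement comparison at each cycle vertex produces a ``helper'' in $C\setminus\Gamma$, and the packing estimate on $C_\ell$ yields $|C|\ge \ell+(\rho-2)\ge 2\rho-2$, exactly one more than allowed. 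Your approach buys independence from the paper's 5-SE theorem and from the two-cycle classification (with its somewhat delicate case analysis), and it exhibits the bound $2\rho-3$ as tight against the derived inequality; the paper's approach, in exchange, gives more structural information about what deviating coalitions can look like. One step you should make explicit: your NE inequality $2+x_v\le z_v$ presupposes $\sigma'(v)\neq\sigma(v)=i$ for every $v\in\Gamma$ (if $j=i$ the two $\Gamma$-neighbors also contribute to $c_v(\sigma_{-v},j)$ and the inequality collapses, killing the helper step). This does hold, but only via minimality, which you assumed: if some $v\in C$ kept its color, $C\setminus\{v\}$ would achieve the same profile $\sigma'$ and all its members would still strictly improve, contradicting minimality of $C$; likewise, the reduction from an arbitrary coalition of size at most $2\rho-3$ to a minimal one (of no larger size) deserves a sentence, though it matches the paper's standing assumption that only minimal coalitions deviate.
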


\begin{corollary}\label{cor:girth}
If $\rho(G) \geq \left(\lvert V\rvert+3\right)/2$, then an optimal coloring is always an SE.
\end{corollary}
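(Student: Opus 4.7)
The plan is to derive this result as an immediate arithmetic consequence of Proposition \ref{prop:girth}, which has already done all the combinatorial work. That proposition guarantees that, for any unweighted graph $G$ and any number of colors $k$, any optimal coloring is a $q$-SE for $q = 2\rho(G) - 3$. So the only thing left is to verify that, under the hypothesis $\rho(G) \geq (|V|+3)/2$, this value of $q$ is large enough to cover every possible coalition.

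Concretely, the first step is to let $\sigma^*$ be an optimal coloring and set $n = |V|$. From the hypothesis $\rho(G) \geq (n+3)/2$ I multiply by two to get $2\rho(G) \geq n+3$, hence $2\rho(G) - 3 \geq n$. Applying Proposition \ref{prop:girth} with this bound, $\sigma^*$ is a $q$-SE for some $q \geq n$, which means no coalition $C \subseteq V$ (of any size, since $|C| \leq n$) can profitably and jointly deviate from $\sigma^*$. By the definition of SE recalled in Section \ref{sec:preliminaries} (an $n$-SE is an SE), this proves that $\sigma^*$ is an SE.

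There is no real obstacle here; the statement is stated as a corollary precisely because all the combinatorial effort is concentrated in Proposition \ref{prop:girth}. The only small point to double-check is that Proposition \ref{prop:girth} is stated uniformly in $k$ (which it is: ``for any number of colors $k$''), so the same optimal coloring argument goes through for every $k$. Thus the proof reduces to the single line $2\rho(G) - 3 \geq n \iff \rho(G) \geq (n+3)/2$.
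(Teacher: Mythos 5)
Your proposal is correct and matches the paper's (implicit) argument exactly: the paper derives the corollary from Proposition~\ref{prop:girth} by the same arithmetic, since $\rho(G) \geq (\lvert V\rvert+3)/2$ gives $2\rho(G)-3 \geq \lvert V\rvert$, so the $(2\rho(G)-3)$-SE guarantee covers coalitions of every possible size and hence yields an SE. Nothing is missing.
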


\subsubsection{Bounded degree}
Here we show that whenever the number of colors $k$ is large enough with respect to the maximum degree of the graph, then any optimal strategy profile is an SE. More precisely, we prove the following:

\begin{prop}\label{prop:degree}
Any optimal strategy profile is an SE when $k\geq \ceil*{\left(\delta^{M} + 1\right)/2}$.
\end{prop}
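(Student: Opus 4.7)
My plan is to combine a pigeonhole bound on the per-player cost with a cut-counting argument. First, I would use that $\sigma^*$ is in particular an NE, so for every player $v$ and every color $i\in K$ the number of neighbors of $v$ colored $i$ is at least $c_v(\sigma^*)$; summing over the $k$ colors yields $k\cdot c_v(\sigma^*)\le \delta^v$. The hypothesis $k\ge\lceil(\delta^{M}+1)/2\rceil$ gives $2k>\delta^{M}\ge \delta^v$, hence $c_v(\sigma^*)<2$, and therefore $c_v(\sigma^*)\le 1$ for every $v\in V$.

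Now suppose, for contradiction, that some coalition $C$ strong-improves from $\sigma^*$ to a coloring $\sigma'$. Since each $v\in C$ strictly lowers its cost starting from a value at most $1$, we must have $c_v(\sigma^*)=1$ and $c_v(\sigma')=0$ for every $v\in C$, so every deviator gains exactly one unit of utility.

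I would then track how the cut changes. Let $y$ be the change in the number of proper edges with both endpoints in $C$ (from $\sigma^*$ to $\sigma'$) and, for each $v\in C$, let $x_v$ be the analogous change counted only over edges between $v$ and $V\setminus C$. Edges with both endpoints outside $C$ are unaffected, so $S(\sigma')-S(\sigma^*)=y+\sum_{v\in C}x_v$. Decomposing each utility change the same way, $\mu_v(\sigma')-\mu_v(\sigma^*)=x_v+y_v$, with $\sum_{v\in C} y_v=2y$ since each internal edge is counted by its two endpoints. Summing the equalities $\mu_v(\sigma')-\mu_v(\sigma^*)=1$ over $v\in C$, I obtain $\sum_{v\in C}x_v+2y=|C|$, and therefore $S(\sigma')-S(\sigma^*)=|C|-y$.

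To close the argument, note that $c_v(\sigma')=0$ for every $v\in C$ implies that $\sigma'$ has no monochromatic edge with both endpoints in $C$, so $y$ equals exactly the number of monochromatic internal edges of $\sigma^*$. As $c_{v,C}(\sigma^*)\le c_v(\sigma^*)\le 1$ for each $v\in C$, double counting yields $2y=\sum_{v\in C}c_{v,C}(\sigma^*)\le |C|$, i.e., $y\le |C|/2$. Plugging back, $S(\sigma')-S(\sigma^*)\ge |C|/2>0$, contradicting the optimality of $\sigma^*$. The only mildly delicate step is the bookkeeping that converts the pointwise utility gain into the aggregate cut-value change; once the identity $S(\sigma')-S(\sigma^*)=|C|-y$ is in place, the pigeonhole bound on $y$ finishes the proof.
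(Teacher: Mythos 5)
Your proof is correct and takes essentially the same route as the paper's: the same pigeonhole-plus-NE argument forcing $c_v(\sigma^*)\le 1$ for all players (hence $c_v(\sigma^*)=1$ and $c_v(\sigma')=0$ for every deviator), followed by a contradiction with the optimality of $\sigma^*$ via a strict increase of the cut. Your double-counting identity $S(\sigma')-S(\sigma^*)=|C|-y$ with $y\le |C|/2$ is simply a more quantitative rendering of the paper's observation that every edge incident to $C$ is proper in $\sigma'$ while at least one such edge is monochromatic in $\sigma^*$.
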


\begin{proof}
Let $\sigma^*$ be an optimal strategy profile and assume $\sigma^*$ is not an SE. Then a coalition $C$ and a strategy profile $\sigma'$ exist such that all players in $C$ strictly improves their utility by deviating to $\sigma'$. We will show that in this case the size of the cut will strictly increase in $\sigma'$, which contradicts the optimality of $\sigma^*$.

As we already pointed out, $\sigma^*$ is NE. Moreover, consider any node $u$. Since its degree $\delta^u$ is at most $\delta^M \le 2k -1$, we have that in any coloring, by the pigeonhole principle, there must exist a color that appears at most once in $u$'s neighborhood. As a consequence, since $\sigma^*$ is an NE, it holds that $\mu_u(\sigma^*)\ge \delta^u -1$. On the other hand, since all nodes in $C$ must strictly improve their utility, we have that, for every $u \in C$, $\mu_u(\sigma^*)=\delta^u -1$ and  $\mu_u(\sigma')=\delta^u$. This implies that the size of the cut must strictly increase. Indeed, consider the edge set $F=\{\{u,v\}| u \in C \text{ or } v \in C\}$. Clearly, only edges in $F$ can enter or leave the cut when the strategy profile changes from $\sigma^*$ to $\sigma'$. Moreover, all edges in $F$ belong to the cut $E(\sigma')$ while there is at least an edge that is not in $E(\sigma^*)$. 
\end{proof}

\section{Conclusions and future work}
We investigated coalition resilient equilibria in the max $k$-cut game. We solved an open problem proposed in \cite{DBLP:conf/tamc/GourvesM10} on weighted graphs by showing that improving deviations performed by minimal coalitions can cycle. We then provided some positive results on unweighted graphs. More precisely, we proved that any optimal coloring is both a 5-SE and a 1-LSE. We also showed that SE exist for some special cases, namely, when the graph has a large girth or the number of colors is large enough with respect to the maximum degree.

Even though we made a progress on the topic, the problem of understanding whether any instance of the max $k$-cut game admits strong equilibria is still open on both weighted and unweighted graphs. We conjecture that an optimal strategy profile is always an SE in the unweighted case. However, proving that seems to be really challenging. Another possible way to prove the existence of an SE would be that of providing a strong potential function. We proved in Proposition \ref{prop:cyclepotential} that such function cannot exist on weighted graphs even when only minimal coalitions can deviate but it is still unknown whether a strong potential function exists or not on unweighted graphs.
Along this direction, an interesting intermediate step could be that of proving the existence of $q$-SE for possibly non-constant values of $q > 5$.

Regarding $x$-local coalitions, our results are only about the case $x=1$ on unweighted graphs. Some other research questions could be the study of the existence of $x$-local strong equilibrium for $x >1$, and how to extend our results to weighted graphs. For instance, it would be interesting to investigate whether any instance of the max $k$-cut game on weighted graphs admits local strong equilibria.

\bibliographystyle{abbrv}
\bibliography{Bibliography}

\newpage
\appendix

\section*{Appendix A: omitted proofs}\label{app:omitted}

\subsection*{Proof of Proposition \ref{prop:propminimalcoalition}}

Let us start with the proof of property (i). In order to prove the equality we show that $C_K(\sigma')\subseteq C_K(\sigma)$ and $C_K(\sigma)\subseteq C_K(\sigma')$.
\begin{itemize}
\item if $v \in C$ and $\sigma'(v) \notin C_K(\sigma)$, then it is easy to see that $v$ can deviate alone and this is a contradiction with the fact that $C$ is minimal;
\item if there is a color $i \in C_K(\sigma)\backslash C_K(\sigma')$, then again it is easy to see that the coalition $C\backslash C_i(\sigma)$ can perform a strong improvement and this is a contradiction with the fact that $C$ is minimal.
\end{itemize}

In order to prove property (ii), we first show the following lemma that will subsequently be used. 
\begin{lemma}\label{lemma:acyclic}
In undirected weighted graphs any acyclic coalition $C$, with $\lvert C\rvert > 2$, that can perform a minimal strong improvement must be placed in no more than two colors, that is $\lvert C_K(\sigma)\rvert \leq 2$.
\end{lemma}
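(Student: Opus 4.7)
The plan is to proceed by contradiction: assume $C$ is a minimal acyclic coalition with $|C|>2$ but $|C_K(\sigma)|\geq 3$, and exhibit a strictly smaller profitable sub-coalition. First I argue that $G(C)$ must be a connected tree. If some $v\in C$ were isolated in $G(C)$, then $v$'s utility would be independent of all other members of $C$ (it would only depend on neighbors outside $C$, whose colors do not change), so $v$ could profitably deviate alone, contradicting minimality of $\{v\}$. Similarly, if $G(C)$ had multiple connected components, then each component could independently deviate to its $\sigma'$-colors and all its members would still strictly improve (since players in different components share no edges), giving a proper profitable sub-coalition. So $G(C)$ is a tree on at least three vertices.

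The main tool is a \emph{leaf-swap} claim: for any leaf $v$ of $G(C)$ with unique $C$-neighbor $u$, we have $\sigma(v)\neq\sigma(u)$, $\sigma'(v)=\sigma(u)$, and $\sigma'(u)=\sigma(v)$. The first two identities follow from $v$'s inability to profitably deviate alone (by minimality of $\{v\}$) combined with $v$'s strict improvement in $\sigma'$: the gap $\mu_v(\sigma')-\mu_v(\sigma_{-v},\sigma'(v))$ is supported entirely on the edge $\{u,v\}$, and for it to be positive one needs $\sigma(u)=\sigma'(v)\neq\sigma'(u)$. The last identity follows because if $\sigma'(u)\neq\sigma(v)$, then $C\setminus\{v\}$ would itself be a profitable sub-coalition: no other $C$-member is adjacent to $v$, and for $u$ the edge $\{u,v\}$ would still be in the cut in the restricted deviation (colors $\sigma'(u)\neq\sigma(v)$), so $u$'s utility is unchanged from $\mu_u(\sigma')>\mu_u(\sigma)$, contradicting minimality.

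With the leaf-swap in hand I would induct on $|C|$. The base case $|C|=3$ is immediate: the tree is a path $v_1-v_2-v_3$, both leaves share $v_2$ as neighbor, and leaf-swap yields $\sigma(v_1)=\sigma(v_3)=\sigma'(v_2)$ and $\sigma'(v_1)=\sigma'(v_3)=\sigma(v_2)$, so $|C_K(\sigma')|=2$ and, by property~(i) of Proposition~\ref{prop:propminimalcoalition}, $|C_K(\sigma)|=2$. For the inductive step I pick a \emph{pre-leaf} $u^*$ of $G(C)$ (a non-leaf all of whose $G(C)$-neighbors are leaves except at most one ``parent'' $p$); such a $u^*$ exists in any tree on three or more vertices. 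Let $v_1,\ldots,v_k$ be the leaf children of $u^*$; by leaf-swap they all share the same $\sigma$-color $a:=\sigma'(u^*)$, and $\sigma(u^*)=:b\neq a$. Consider the proper sub-coalition $C'=\{u^*,v_1,\ldots,v_k\}$ (if $u^*$ has no parent then $G(C)$ is a star and leaf-swap already gives $|C_K(\sigma)|=2$). Every $v_i$ strictly improves under $\sigma'|_{C'}$ since its only $C$-neighbor $u^*$ has the same color in $\sigma'|_{C'}$ as in $\sigma'$. For $u^*$ the only edge whose status could differ between $\sigma'$ and $\sigma'|_{C'}$ is $\{u^*,p\}$: if $\sigma(p)\neq a$, the edge remains in the cut in $\sigma'|_{C'}$ and $\mu_{u^*}(\sigma'|_{C'})\geq\mu_{u^*}(\sigma')>\mu_{u^*}(\sigma)$, so $C'$ is a proper profitable sub-coalition, contradicting minimality.

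The hard part is the remaining tricky case $\sigma(p)=a$, in which the edge $\{u^*,p\}$ becomes monochromatic in the restricted deviation and $u^*$'s utility may drop below $\mu_{u^*}(\sigma)$. My plan is an upward propagation: enlarge the sub-coalition by adding $p$ and, if $p$'s own parent is also $\sigma$-colored $a$, its parent too, continuing along the chain of $a$-colored ancestors. At each step one re-checks the new boundary edge of the enlarged sub-coalition; as soon as it stays in the cut, the enlarged sub-coalition is a proper profitable deviation and contradicts minimality. If instead the monochromatic-$a$ chain of ancestors spans the entire tree, then every vertex of $C$ is colored $a$ or $b$ in $\sigma$, so $|C_K(\sigma)|\leq 2$ directly. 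Showing rigorously that this propagation must terminate in one of the two outcomes---keeping track of how side branches of the ancestors interact with the sub-coalition and verifying each extended sub-coalition's members continue to strictly improve---is where the bulk of the technical work will lie.
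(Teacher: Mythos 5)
Your preliminary work is sound: the reduction to a connected tree, the leaf-swap claim (for a leaf $v$ with unique $C$-neighbor $u$, $\sigma'(v)=\sigma(u)$ and $\sigma'(u)=\sigma(v)$, derived by localizing $\mu_v(\sigma')-\mu_v(\sigma_{-v},\sigma'(v))$ on the edge $\{u,v\}$ and invoking minimality), and the base case $|C|=3$ are all correct. But the proof has a genuine gap exactly where you flag it: the case $\sigma(p)=a$ is not an implementation detail, it is the heart of the lemma, and the proposed upward propagation does not obviously work. Two concrete obstructions. First, when you enlarge the sub-coalition to $C''=C'\cup\{p\}$ and let $p$ play $\sigma'(p)$, you must verify that $p$ strictly improves in the \emph{restricted} deviation; but $p$'s side-branch neighbors in $C\setminus C''$ keep their $\sigma$-colors, and if such a child $c$ satisfies $\sigma(c)=\sigma'(p)\neq\sigma'(c)$, the edge $\{p,c\}$ is in the cut under $\sigma'$ but monochromatic under the restriction, so $p$'s improvement is not inherited from $\sigma'$. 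Repairing this forces propagation \emph{downward} into the side branches as well as upward, and the enlarged set can grow back to all of $C$, at which point the sub-coalition is no longer proper and the contradiction with minimality evaporates. Second, your terminal dichotomy is unjustified: the chain of $a$-colored ancestors is a path, so it ``spans the entire tree'' only when the tree is a path, and even then vertices off the chain (in a branching tree) need not be colored $a$ or $b$, so the conclusion $|C_K(\sigma)|\le 2$ does not follow from the chain exhausting itself.

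For contrast, the paper avoids this local induction entirely with a global dependency argument: build a digraph $G'=(C,E')$ with an arc $(v,u)$ whenever $\{v,u\}\in E$, $\sigma'(v)=\sigma(u)$, and $v$'s deviation is unprofitable unless $u$ moves, i.e. $\mu_v(\sigma')-w(\{v,u\})\leq\mu_v(\sigma)$. Minimality forces every vertex of $C$ to have both in-degree and out-degree at least one (a vertex with no outgoing arc could deviate alone; one with no incoming arc could be dropped from $C$), so $G'$ contains a directed cycle, which projects to a cycle in $G(C)$ unless it is a $2$-cycle between adjacent vertices swapping colors; acyclicity of $G(C)$ then leaves only the swap configuration, which pins $C$ to at most two colors. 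Your leaf-swap claim is essentially the $2$-cycle phenomenon seen from the leaves, but the digraph formulation lets the paper conclude in one stroke where your tree surgery stalls. To salvage your route you would need to prove, not merely assert, that the propagation terminates in a proper profitable sub-coalition or a two-coloring of all of $C$ --- and the side-branch interaction is precisely what makes that nontrivial.
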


\begin{proof}
Given a minimal acyclic coalition $C$, if $\sigma'$ is the resulting coloring after $C$ deviates, we know from property (i) that $C_K(\sigma') = C_K(\sigma)$. We build a directed graph $G' = (C, E')$ where an edge $(v,u)$ is in $E'$ if and only if
(i) $\{v,u\}\in E$,
(ii) $\sigma'(v) = \sigma(u)$, and
(iii) if $u$ does not change color, then $v$ has no interest in deviating, that is, $\mu_v(\sigma') - w(\{v,u\}) \leq \mu_v(\sigma)$. In other words, the node $v$ has interest in changing color only if also node $u$ changes color. 
$G'$ represents all those moves that are mandatory in order to make the deviation profitable to every player in $C$. Since $C$ performs a minimal strong improvement, every player $v \in C$ must have at least one ingoing edge and one outgoing edge, since otherwise $v$ could be removed from $C$, i.e., there exists a sub coalition of $C$ not containing $v$ that can perform a strong improvement, or $v$ could deviate alone, respectively. This implies the existence of at least one oriented cycle in $G'$, and consequently the existence of a cycle in $G(C)$ too. Since we know that $G(C)$ is acyclic, the only type of cycle that can happen in $G'$ is the one limited to adjacent vertices, and this happens only when the number of colors used by $C$ is at most 2.
\end{proof}

Lemma \ref{lemma:acyclic} says that a minimal coalition $C$ must use at most two colors when $G(C)$ is acyclic. If $\lvert C_K(\sigma)\rvert = 1$ then $\lvert C\rvert = 1$ by the minimality condition, and we already know that the cut-value always strictly increases after a Nash improvement \cite{DBLP:conf/stacs/ChristodoulouMS06}. If $\lvert C_K(\sigma)\rvert = 2$ we know from \cite{DBLP:conf/wine/GourvesM09} that each strong improvement performed by a coalition that uses only two colors (i.e., the classical max cut problem) always increases the size of the cut.

\qed

\subsection*{Proof of Proposition \ref{prop:propcoloring}}
By hypothesis we know that $|C_K(\sigma)| \in \{2, \lvert C\rvert - 1, \lvert C\rvert \}$.
Let us consider the three cases in detail: 

\begin{itemize}
\item $|C_K(\sigma)| = 2$. This result derives from \cite{DBLP:conf/wine/GourvesM09}, which shows that in the max cut game (i.e., $k$ = 2), any strong improvement always strictly increases the size of the cut. 	
	\item $|C_K(\sigma)| = \lvert C\rvert$. For each color $i\in C_K(\sigma)$ there is a unique player $u \in C$ such that $\sigma(u) = i$. By Proposition \ref{prop:propminimalcoalition}, property (i), we know that no pair of players can share the same color in $\sigma'$, since otherwise $C$ would not be minimal. Moreover, for each player $u\in C$ it holds by Lemma \ref{lemma:2} that $\delta_u^{\sigma(u)}(\sigma) = \delta_u^{\sigma'(u)}(\sigma)$, namely the number of edges that leave the cut is equal to the number of edges towards non-deviating players that enter the cut.
In addition, since players' utilities must increase there must be at least $\lvert C\rvert -1$ more edges in the cut than the one in $\sigma$, that is the cut-value increases.
	\item $|C_K(\sigma)| = \lvert C\rvert - 1$. There must be exactly two vertices colored the same in $\sigma$, let them be $u$ and $v$. Analogously, in $\sigma'$ there must be exactly two nodes colored the same too, let them be $x$ and $y$. There are three possible cases:
	\begin{itemize}
		\item  $u = x$ and $v = y$, that is $u$ and $v$ stay together in both $\sigma$ and $\sigma'$. If $\{u,v\}\notin E$, then by Lemma \ref{lemma:2} $|C_K(\sigma)| = \lvert C\rvert$, $\delta_u^{\sigma(u)}(\sigma) = \delta_u^{\sigma'(u)}(\sigma)$ and $\delta_v^{\sigma(v)}(\sigma) = \delta_v^{\sigma'(v)}(\sigma)$ and similarly to the case $|C_K(\sigma)| = \lvert C\rvert$, $u$ and $v$ provide a new edge each to the cut which increases its size. Conversely, if $\{u,v\}\in E$ then again by Lemma \ref{lemma:2} at least one player among $u$ and $v$ does not improve its utility after the deviation, thus $C$ is not a strong improvement.
		\item $u \neq x$ and $v \neq y$, that is in $\sigma'$ two new nodes $x,y\in C$ share the same color, while $u$ and $v$ are now separated. If $\{x,y\}\in E$, then this edge leaves the cut in $\sigma'$. If $x$ and $y$ switch to a color from which only one player deviates, then by Lemma \ref{lemma:2} their utility do not strictly increase. Thus, it must necessarily be that $x$ and $y$ benefit from the deviation of at least two players, and the only chance is $u$ and $v$. Therefore, $\sigma'(x) = \sigma'(y)  = \sigma(u) = \sigma(v)$, and $\{x,u\}, \{x,v\}, \{y,u\}, \{y,v\} \in E$. In order to strictly increase $x$ and $y$'s utility it must be that $\delta_x^{\sigma(x)} = \delta_x^{\sigma(u)}$ and $\delta_y^{\sigma(x)} = \delta_y^{\sigma(u)}$, because they earn two edges and they lose edge $\{x,y\}$. Moreover, we know from Lemma \ref{lemma:2} that $\delta_u^{\sigma(u)} = \delta_u^{\sigma'(u)}$ and
$\delta_v^{\sigma(u)} = \delta_v^{\sigma'(v)}$. So, four new edges enter the cut and edge $\{x,y\}$ leaves the cut, that is the cut-value increases. Conversely, if $\{x,y\}\notin E$, then the potential increases in a way similarly to the one described in the previous case.
		\item One node among $u$ and $v$, say $u$, shares color $\sigma'(u)$ with another node $x\in C$. It must be that $\{u,x\}\notin E$ otherwise, by Lemma \ref{lemma:2}, $x$ would have no desire to deviate. Thus, the cut-value must increase.
	\end{itemize}
\end{itemize}
\qed

\subsection*{Proof of Theorem \ref{prop:propcoalition}}
We will argue that any improving deviation performed by a coalition of size at most 5 strictly increases the size of the cut. 

Let $C=\{u,v,w,x,y\}$ be a minimal coalition of size five that can deviate from a stable coloring $\sigma$ inducing coloring $\sigma'$. We already know from Proposition \ref{prop:propcoloring} that when $C$ is placed in either two, four or five colors in $\sigma$, then any of its minimal improvement always increases the cut-value. Thus, the remaining case is when $\lvert C_K(\sigma)\rvert = 3$. Let the colors used be $1$, $2$, and $3$. Let us consider how the nodes in $C$ are located in $\sigma$.

First, there could be three nodes $u,v,w$ sharing the same color $1$, while nodes $x$ and $y$ are colored $2$ and $3$, respectively. Since $u,v$ and $w$ have to choose which color to deviate among $2$ and $3$, at least two of them have to be colored the same in $\sigma'$ too. Let us assume without loss of generality that $u$ an $v$ share the same color in $\sigma'$. If $\{u,v\}\in E$, then by Lemma \ref{lemma:2} they do not improve their utility because they benefit from the deviation of only one player. This implies that the nodes in $\{u,v,w\}$ that are colored the same in $\sigma'$ cannot have any edge between them.
Moreover, if one of the three nodes, say $w$, deviates to the same color as another player of the coalition, say $x$, then $\{w,x\}\notin E$ again by Lemma \ref{lemma:2}. Thus, the potential value increases for such configuration.

The other case is when two nodes $u,v$ are colored $1$, two nodes $w,x$ are colored $2$ and the fifth node $y$ is colored 3. Let us analyze how the players in $C$ can be located in $\sigma'$ after the deviation.
\begin{itemize}
\item There are three nodes paired together and two nodes alone. The triplet must necessarily be composed by a pair of nodes that are paired together also in $\sigma$, let them be $u,v$, plus a third node. The number of edges in the subgraph induced by such triplet is at most one. In fact, there is a player in such subgraph with degree 2 then, since it benefits from the deviation of at most two players and $\sigma$ is stable, its utility does not increase after the deviation. Moreover, there are two players in $C$ that are alone in $\sigma'$, that is they add at least one edge each to the cut. Therefore, at most one edge leaves the cut and at least two edges enters the cut, namely $\Phi$ increases.
\item There are two pairs of nodes together, and a node alone. If the node alone is $y$ in both $\sigma$ and $\sigma'$,
then the two pairs of nodes in $\sigma'$ must be the same in $\sigma$, that is $\sigma'(u)=\sigma'(v)$ and $\sigma'(w)=\sigma'(x)$. One of the two pairs, say $u,v$ deviates to $\sigma(y)$, that is it benefits only from the deviation of $y$, and by Lemma \ref{lemma:2} this implies that $\{u,v\}\notin E$. The other pair $w,x$ deviates to $\sigma(u)$, and if $\{w,x\}\in E$ it implies a deviation similar to the one described in Proposition \ref{prop:propcoloring} for $|C_K(\sigma)| = \lvert C\rvert - 1$, case 2, and we know that such type of deviation increases the cut-value. If $\{w,x\}\notin E$, then the cut-value increases similarly to Proposition \ref{prop:propcoloring}, case 1 of $|C_K(\sigma)| = \lvert C\rvert -1$.  Conversely, suppose that $y$ is now paired with another node, say $u$. If $\{y,u\}\in E$ then in order to make player $y$ deviate profitably, it must have two edges $\{y,w\}, \{y,x\}$ towards color $\sigma(w)$, and it must be that $\sigma'(y) = \sigma(w)$. As described in previous cases, this kind of deviation increases the cut-value. If $\{y,u\}\notin E$, then by Lemma \ref{lemma:2} $y$ and $u$ add at least two edges to the cut. The other pair is necessarily colored $\sigma(y)$, that is the two players benefit only from $y$'s deviation. Lemma \ref{lemma:2} implies that they cannot have an edge among them, therefore the edges in the cut increases.
\end{itemize}
\qed

\subsection*{Proof of Lemma \ref{lm:tech1}}
Let $u \in C$ be a node in the coalition and let $\nu=\sigma'(u)$ be the new color of $u$. 
We claim that
\begin{equation}\label{eq:eq1}
0< c_u(\sigma)-c_u(\sigma')\leq  c_{u,C}(\sigma_{-u}, \nu)- c_{u,C}(\sigma')
\end{equation}
\noindent Indeed, first notice that $u$ strictly decreases its cost deviating to $\sigma'$, and hence $c_u(\sigma') < c_u(\sigma)$. Moreover, since $\sigma$ is an NE we also have that  $c_u(\sigma) \leq c_u(\sigma_{-u}, \nu)$. As a direct consequence we get the following double inequality:
\[
0<c_u(\sigma)-c_u(\sigma')\leq c_u(\sigma_{-u}, \nu)-c_u(\sigma').
\]
Since the players outside the coalition do not change their colors, it holds that $(c_u(\sigma')-c_{u,C}(\sigma'))=(c_u(\sigma_{-u}, \nu)-c_{u,C}(\sigma_{-u}, \nu))$. And hence we have
$c_u(\sigma')=c_u(\sigma_{-u}, \nu)-c_{u,C}(\sigma_{-u}, \nu)+c_{u,C}(\sigma')$, which implies:
 \begin{align}
 & c_u(\sigma_{-u}, \nu)-c_u(\sigma') \; = \notag\\
 = \;& c_u(\sigma_{-u}, \nu)-(c_u(\sigma_{-u}, \nu)-c_{u,C}(\sigma_{-u}, \nu)+c_{u,C}(\sigma')) \; = \notag\\
 = \;& c_{u,C}(\sigma_{-u}, \nu)-c_{u,C}(\sigma'). \notag
 \end{align}
Replacing this expression in the double inequality we have derived before, we get Equation \eqref{eq:eq1}.

Since $C$ is a clique and $\sigma'(u)=\nu$ we know also that $c_{u,C}(\sigma')=|C_{\nu}(\sigma')|-1$ while $c_{u,C}(\sigma_{-u}, \nu)=|C_{\nu}(\sigma)|$. Replacing these terms in Equation \eqref{eq:eq1} we obtain:
\begin{equation}\label{eq:eq2}
0< c_u(\sigma)-c_u(\sigma')\leq  |C_{\nu}(\sigma)|-|C_{\nu}(\sigma')|+1
\end{equation}
which implies that $|C_{\nu}(\sigma')| \leq |C_{\nu}(\sigma)|$. The arguments used so far hold for any player $u \in C$, and thus the above inequality also holds that for all colors $i \in C_K(\sigma')$, i.e. $|C_{i}(\sigma')| \leq |C_{i}(\sigma)|$ for each $i \in C_K(\sigma')$. Observe that this implies also that 
$C_K(\sigma') \subseteq C_K(\sigma)$.

By summing up over all colors in $C_K(\sigma)$, we have that:
\[
\sum_{i \in C_K(\sigma)} |C_i(\sigma)| = \sum_{i \in C_K(\sigma)} |C_i(\sigma')| = |C|
\]
and as a consequence we get $|C_i(\sigma)|=|C_i(\sigma')|$ for all $i \in C_K(\sigma)$ that is exactly (i).
Thus we can rewrite Equation \eqref{eq:eq2} as:
\[
0< c_u(\sigma)-c_u(\sigma')\leq 1
\]
which implies (ii).

We now prove (iii).  
We note that:
\begin{align*}
c_u(\sigma') & =  |\{(u,v)| v \notin C, \sigma'(v) = \nu \}| +  |\{(u,v)| v \in C \setminus\{u\}, \sigma'(v) = \nu\}| \\
& =  |\{(u,v)| v \notin C, \sigma(v) = \nu \}| + |C_{\nu}(\sigma')| - 1 \\
& =  |\{(u,v)| v \notin C, \sigma(v) = \nu \}| + |C_{\nu}(\sigma)| - 1 \\
& = c_u(\sigma_{-u}, \nu) - 1
\end{align*}
\noindent where in the last but on equality we used (i). 
Since by (ii) we know that $c_u(\sigma) = c_u(\sigma') + 1$, we can conclude (iii).
\qed

\subsection*{Proof of Proposition \ref{prop:phinopotential}}

\begin{figure}[h]
\begin{center}
\includegraphics[scale=0.4]{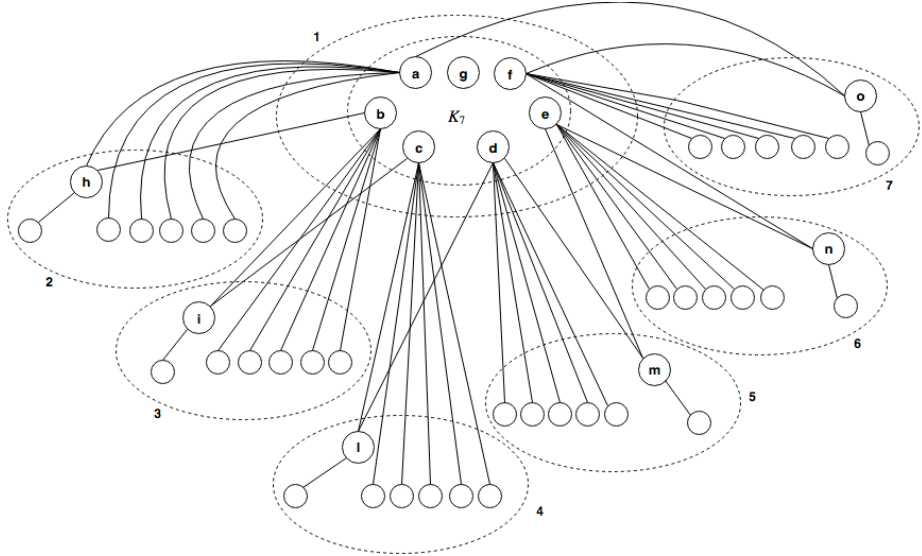}
\caption{Instance $G$ and coloring $\sigma$ for which the minimal strong improvement strictly decreases the cut-value.}\label{fig:unweightedlowerpotential}
\end{center}
\end{figure}

Consider the graph $G$ and a coloring $\sigma$. Both are depicted in Figure \ref{fig:unweightedlowerpotential}. $K_7$ denotes a clique composed by nodes $a, b, \ldots, g$. Due to space shortage, only the needed edges are explicitly reported. Regarding the omitted edges, we assume that 
(i) none of the non-labelled nodes vertices desire to change color,
(ii) node $g$ does not want to deviate to any other color,
(iii) nodes $h, i, l, m, n,$ and $o$ have sufficient edges towards all the other colors, except color $1$, to prevent them from deviating to such colors, and
(iiii) $a, b, c, d, e$, and $f$ have sufficient edges towards all the other colors, except colors $2, 3, 4, 5, 6$, and $7$ respectively, to prevent them from deviating to such colors.
Given these assumptions, the only minimal coalition that can perform an improving move is $C=\{a,b,c,d,e,f,h,i,l,m,n,o\}$. In fact, $a$ has six edges towards both color $1$ and color $2$, thus it is willing to deviate to color $2$ only one of its neighbors leaves that color. Node $h$ has one edge towards color $2$ and two edges towards nodes $a$ and $b$, that are colored $1$. Thus, $h$ moves to color $1$ only if players $a$ and $b$ deviate. Similarly to node $a$, $b$ deviates to color $3$ only if $i$ changes color, and so on and so forth. Finally, player $o$ deviates to color $1$ only if $a$ and $f$ change color. To sum up, the coloring $\sigma'$ obtained after $C$'s deviation is such that $\sigma'(a) = 2, \sigma'(b) = 3, \sigma'(c) = 4, \sigma'(d) = 5, \sigma'(e) = 6, \sigma'(f) = 7, \sigma'(h) = \ldots = \sigma'(o) = 1$, while the other nodes' strategies remain unchanged. It is not difficult to check that $\Phi(\sigma') - \Phi(\sigma) =-3$ that is, $\Phi$'s value strictly decreases even after a strong minimal improvement. 
\qed

\subsection*{Proof of Proposition \ref{prop:girth}}

\begin{figure}[h!]
\begin{center}
\includegraphics[scale=0.35]{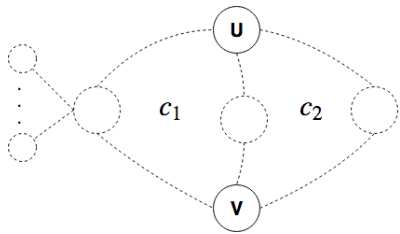}
\caption{A coalition $C$ whose size is upper-bounded as defined in Proposition \ref{prop:girth} has at most two cycles.}\label{fig:girth}
\end{center}
\end{figure}

Let $\sigma^*$ be an optimal coloring and let $C$ be a minimal coalition that deviates from $\sigma^*$ to $\sigma'$, where $\lvert C\rvert \leq \left(2\rho(G) - 3\right)$. Let us consider how many cycles there can be at most in $G(C)$ according to $\lvert C\rvert$. 
\begin{itemize}
\item Trivially, if $\lvert C\rvert < \rho(G)$ then $G(C)$ must not contain any cycle;
\item If $\lvert C\rvert \geq \rho(G)$ then $G(C)$ can contain at least one cycle;
\item Given a cycle, since it has length at least $\rho(G)$ there always exist a pair of nodes $u,v$, such that $dist_{G(C)}(u,v) \geq \floor*{\rho(G)/2}$, that is the shortest path between $u$ and $v$ is made of at least $\floor*{\rho(G)/2} + 1$ nodes, $u$ and $v$ included. Thus, by having at least $\ceil*{\rho(G)/2} - 1$ new nodes then $G(C)$ is allowed to have a second cycle that makes use of the shortest path between $u$ and $v$;
\item Since $\lvert C\rvert \leq \left(2\rho(G) - 3\right)$, any coalition $C$ cannot contain more than two cycles in $G(C)$, otherwise there would be a cycle of length strictly less than $\rho(G)$.
\end{itemize}
Let us assume that $G(C)$ contains two cycles $c_1$, $c_2$ with some nodes in common. Let $u$ and $v$ be the first and last node in the intersection of the cycles, respectively. Moreover, since there cannot be any more cycles, each node of the cycles can be the root of a (possible empty) tree of deviating players (if we do not consider the other nodes in the cycles). A sketch of $G(C)$ is depicted in Figure \ref{fig:girth}.

First, we know from Proposition \ref{prop:propminimalcoalition} that the deviation performed by a player $w$ that belongs to $C$ but not to $c_1$ and $c_2$ always increases the size of the cut. If it was not so then it would mean that $w$ is not increasing its utility because, since it does not belong to any of the cycles, its adjacent players who deviate always select a color different from its own in $\sigma'$, otherwise some node could be removed from $C$, contradicting the minimality assumption.

Let $w \in c_1\cup c_2$ be a node that belongs to one of the two cycles. The only $w$'s deviating neighbors that can be in the same color as $w$ in $\sigma'$ are the ones that belong to $c_1$ or $c_2$. In fact, if there is a node $x$ that is adjacent to $w$, is not in $c_1$ or $c_2$ and $\sigma'(x) = \sigma'(w)$, then by removing from $C$ the subtree with root $x$ we still have a deviating coalition. For the same reason, there cannot be a deviating node $x$ that is adjacent to $w$, is not in $c_1$ and $c_2$ and such that $\sigma^*(w) = \sigma^*(x)$, because $w$ does not need $x$ to deviate. Therefore, we can assume in the following that $C$ is composed only of $c_1$ and $c_2$.

Node $w$ must have at least one adjacent node $x$ colored differently from it in $\sigma^*$, that is $\sigma^*(w) \neq \sigma^*(x)$, otherwise $w$ would have no reason to deviate jointly because it could perform a Nash deviation itself. Moreover, since (i) $2\leq \delta^w(G(C)) \leq 3$ $\forall w\in C$, (ii) each deviating player has at least one proper edge in $G(C)$ with respect to $\sigma$, (iii) by Lemma \ref{lemma:2}, players with degree $2$ cannot share the same color with some other deviating nodes in $\sigma'$, otherwise they could have deviated alone, and (iiii) there are only two not-neighboring players $u, v$ with degree 3 in $G(C)$, it must necessarily be that no edge in $G(C)$ is monochromatic with respect to $\sigma'$.
Moreover, the difference between the number of properly colored edges incident to $w\in C$, where $\delta^w(G(C)) = 2$, and towards nodes outside $C$ in $\sigma'$ and $\sigma^*$, respectively, is greater or equal than 0. This is true otherwise $w$'s utility has not strictly improved after the deviation, since $w$ can improve its utility by at most $1$ from the deviation of its neighbors because $\sigma^*$ is Nash stable by definition. Thus, the deviating players with degree in $C$ equal to 2 contributes to the cut a total of $\lvert C\rvert -2$ edges, since they all improve their utility by 1 each. Since we know from Theorem \ref{prop:propcoalition} that $\sigma^*$ is a 5-SE, we can assume $\lvert C\rvert > 5$, that is more than three edges enters the cut.
On the other hand, the only two nodes $u,v$ with degree $\delta^u(G(C)) = \delta^v(G(C)) = 3$ can decrease the number of the edges in the cut and towards nodes not in the coalition by at most 1 each. This is true otherwise if one of them decreases the cut-value by at least 2 then it does not strictly improve its utility. 
Thus, the cut-value increases, but this contradicts the fact that $\sigma^*$ is optimal.

\qed

\section*{Appendix B}\label{app:properinclusions}

In this section we show that the inclusions depicted in Figure \ref{fig:characterization} are all proper.

\begin{figure}[ht]
\centering
\includegraphics[scale=0.2]{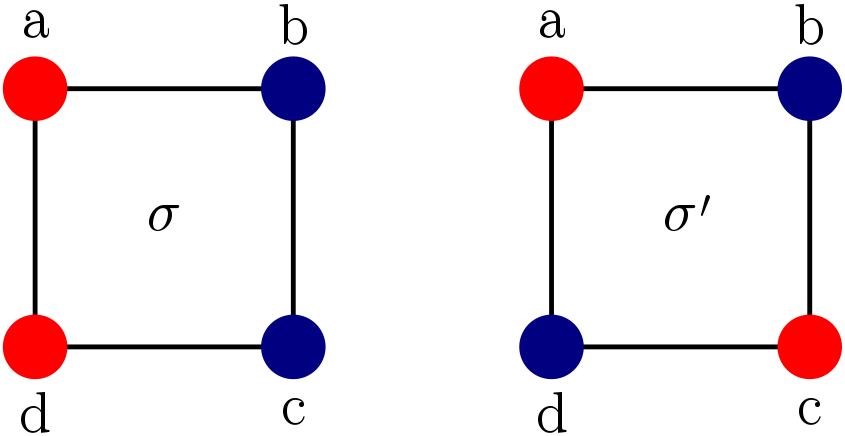}
\caption{An example of an NE which is not a 2-SE in the Max-Cut game.}
\label{fig:NE_not_2SE}
\end{figure}

First we give an example of NE which is not a 2-SE in the Max-Cut game. The example is shown in Figure \ref{fig:NE_not_2SE}. The strategy profile $\sigma$ is an NE, but the couple \{c, d\} can improve their utility by deviating to $\sigma'$.

\begin{figure}[ht]
\centering
\subfloat[Initial strategy profile.]{
\includegraphics[width=0.3\linewidth]{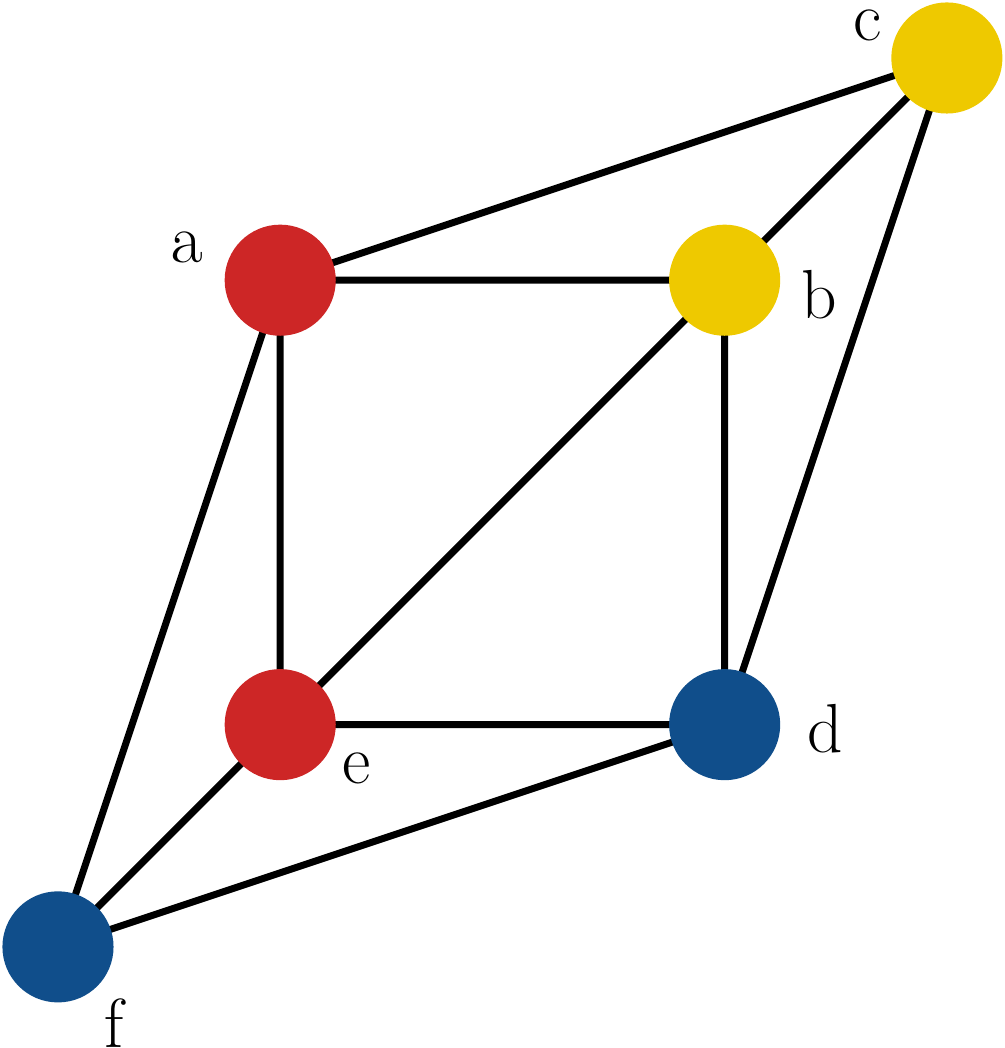}
\label{fig:2SE_notLSE1}}
\quad
\subfloat[Strategy profile after the deviation of b, d and e.]{
\includegraphics[width=0.3\linewidth]{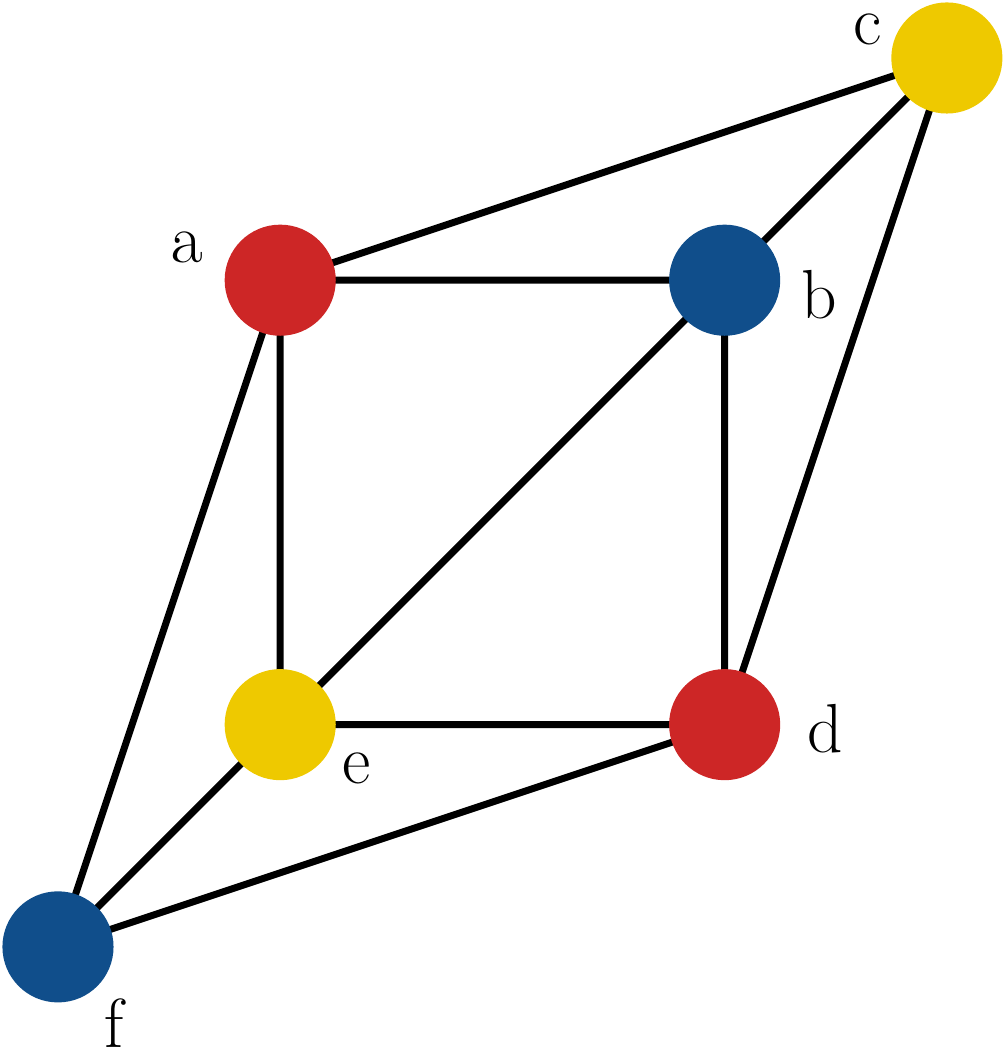}
\label{fig:2SE_notLSE2}}
\caption{Example of a 2-SE which is not an LSE in the Max-3-Cut game.}
\label{fig:2SE_notLSE}
\end{figure}

In Figure \ref{fig:2SE_notLSE} we have an example of a 2-SE (Figure \ref{fig:2SE_notLSE1}) which is not an LSE. Figure \ref{fig:2SE_notLSE2} shows a profitable deviation of players b, d and e, which form a clique of three nodes. Thus the initial strategy profile is not a LSE.

\begin{figure}
\centering
\subfloat[Initial strategy profile.]{%
\includegraphics[width=0.35\linewidth]{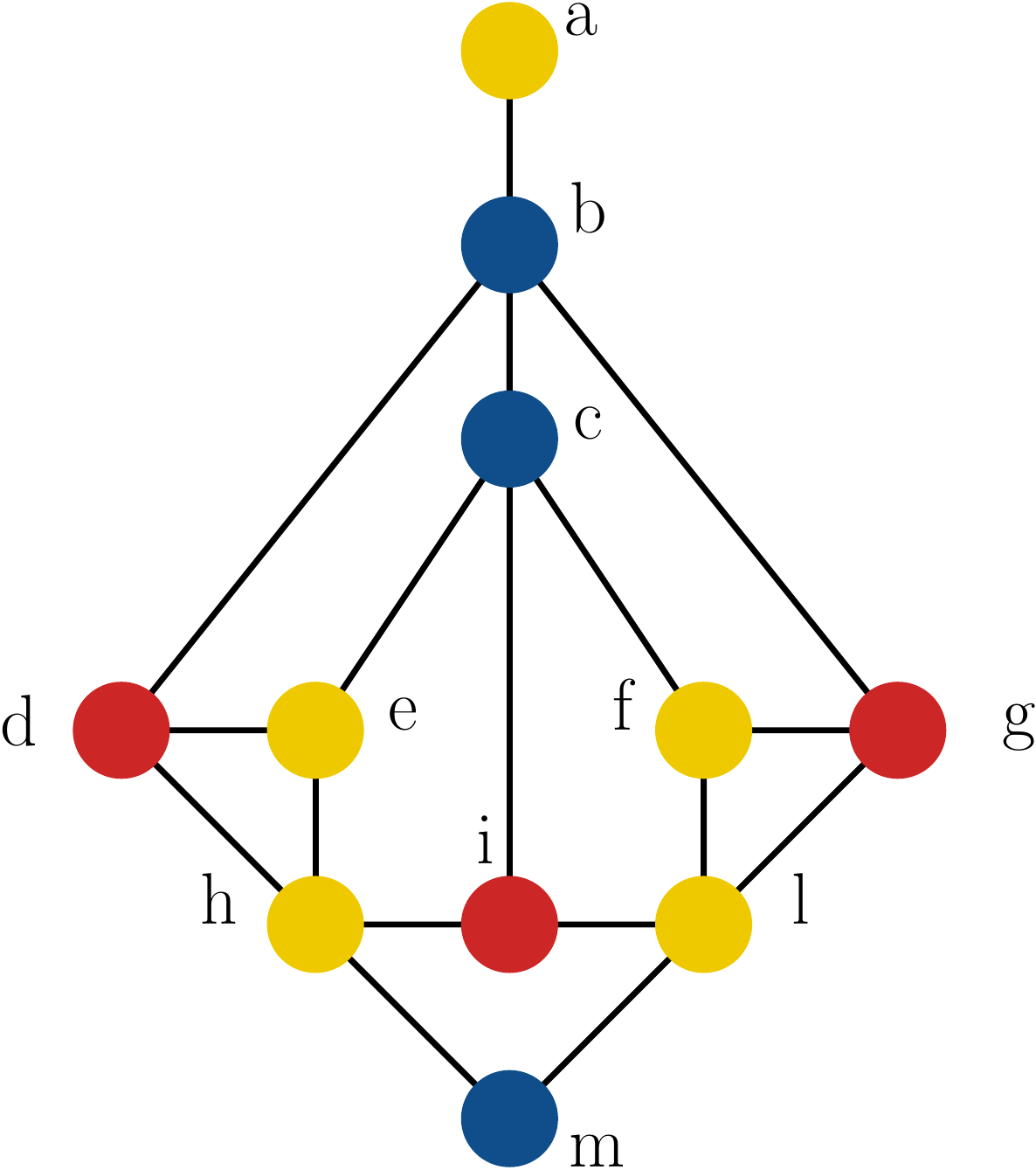}
\label{fig:LSE_not_3SE1}}
\quad
\subfloat[Strategy profile after the deviation of c, e and f.]{%
\includegraphics[width=0.35\linewidth]{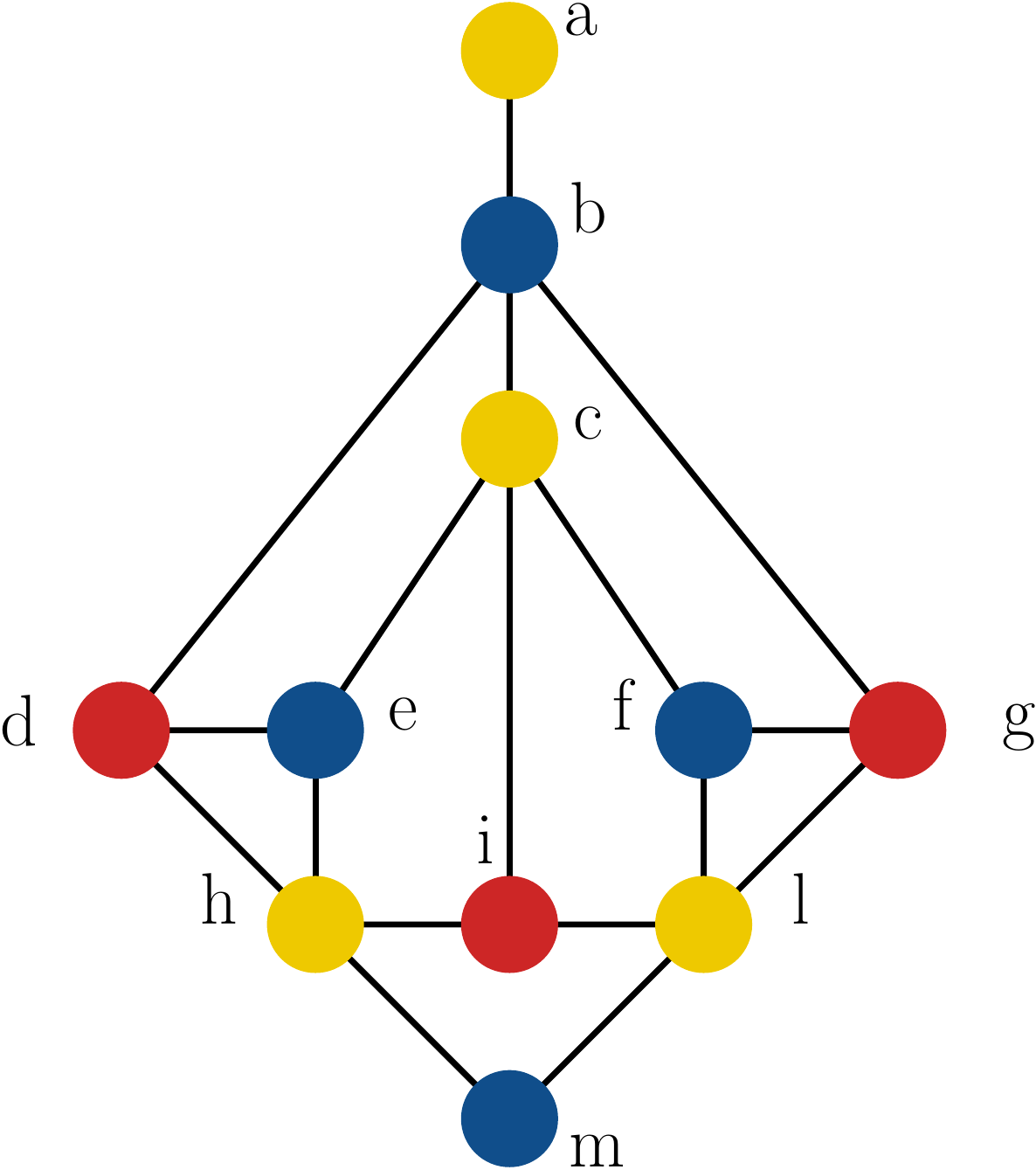}
\label{fig:LSE_not_3SE2}}
\caption{Example of a LSE which is not a 3-SE in the Max-3-Cut game.}
\label{fig:LSE_not_3SE}
\end{figure}

Now we show an example of a LSE being not a $k$-SE. In the following example $k=3$. The example is given in Figure \ref{fig:LSE_not_3SE}.
 It is easy to check that the strategy profile shown in Figure \ref{fig:LSE_not_3SE1} is an NE. Moreover, the nodes \emph{a, d, g, i, m} have the highest possible utility, thus no deviating coalition can contain them, included the two cliques of 3 nodes. All possible couples either contain one of the players mentioned above, or does not satisfy the conditions given in \ref{lm:tech1}, thus they do not profit deviating. Thus, the strategy profile is an LSE. On the other hand, if \emph{e} and \emph{f} change to \emph{blue} and \emph{c} to \emph{yellow} they all improve by 1 their payoff. Thus \{\emph{c, e, f}\} is a non-local coalition which breaks the equilibrium.

\begin{figure}[b]
\centering
\includegraphics[width=0.5\linewidth]{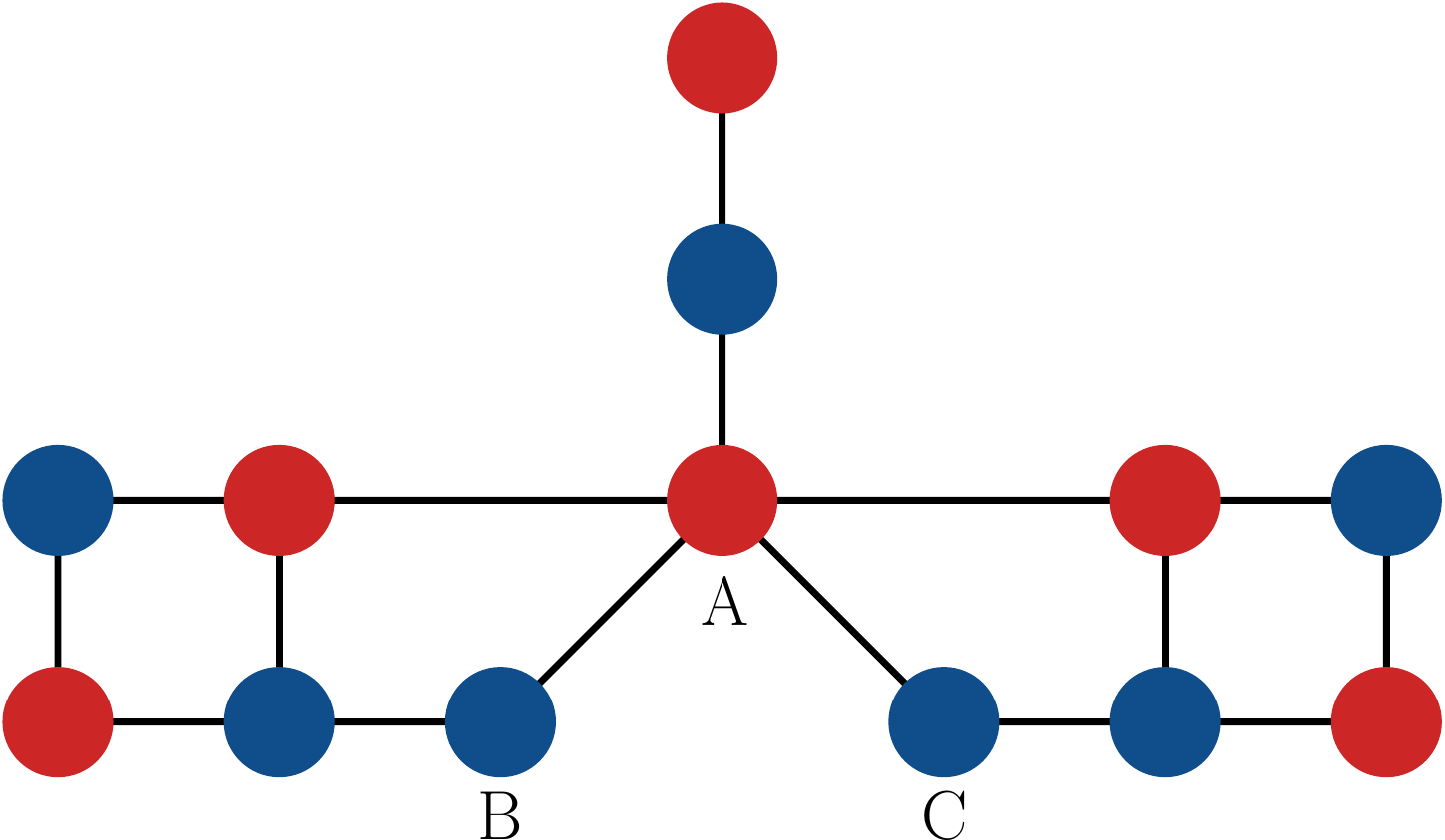}
\caption{Example of a 2-SE which is not an SE in the Max-Cut game.}
\label{fig:MaxCut_counter}
\end{figure}

There is only one case left. In Figure \ref{fig:MaxCut_counter} is given an instance of the Max-Cut game along with a 2-SE which is not an SE (if A, B and C deviate in the only possible way they all improve their utility by 1).

\end{document}